\documentclass[prx,a4paper,aps,twocolumn,superscriptaddress,10pt,longbibliography]{revtex4-1}

\usepackage{amsmath,amsthm,graphicx,xcolor,times,xfrac,booktabs,mathtools,enumitem,xr,subfigure,bbm,verbatim,appendix,placeins,comment}
\usepackage[unicode=true,bookmarks=true,bookmarksnumbered=false,bookmarksopen=false,breaklinks=false,pdfborder={0 0 1}, backref=false,colorlinks=true]{hyperref}
\setcounter{secnumdepth}{3}
\setlength{\bibsep}{-0.08pt}

\renewcommand*{\url}[1]{\href{#1}{#1}}

\makeatletter
\theoremstyle{plain}
\newtheorem{thm}{\protect\theoremname}
\theoremstyle{plain}
\newtheorem{lem}{\protect\lemmaname}
\theoremstyle{plain}

\theoremstyle{remark}
\newtheorem*{rem*}{\protect\remarkname}
\theoremstyle{plain}

\theoremstyle{plain}
\newtheorem{cor}{\protect\corollaryname}
\theoremstyle{definition}
\newtheorem{defn}{\protect\definitionname}
\theoremstyle{plain}
\newtheorem*{thm*}{\protect\theoremname}
\theoremstyle{plain}
\newtheorem*{lem*}{\protect\lemmaname}
 
\providecommand{\propositionname}{Proposition}
\providecommand{\theoremname}{Theorem}
\providecommand{\lemmaname}{Lemma}
\providecommand{\remarkname}{Remark}
\providecommand{\conjecturename}{Conjecture}
\providecommand{\definitionname}{Definition}
\providecommand{\corollaryname}{Corollary}
\allowdisplaybreaks

\def\bra#1{\langle{#1}\vert}
\def\ket#1{\vert{#1}\rangle}

\def\BraVert{e.g.,roup\,\mid\,\bgroup}

\newcommand{\ketbra}[2]{\ensuremath{|{#1}\rangle\langle{#2}|}}

\def\tr#1{\mbox{tr}\left[{#1}\right]}
\newcommand{\ptr}[2]{\mbox{tr}_{#1}\left[ #2 \right]}

\DeclareMathOperator{\diag}{diag}


\begin{document}


\title{Exponential improvement for quantum cooling through finite-memory effects}
\date{\today}

\author{Philip Taranto}
\email{philip.taranto@oeaw.ac.at}
\affiliation{Institute for Quantum Optics and Quantum Information, Austrian Academy of Sciences, Boltzmanngasse 3, 1090 Vienna, Austria}

\author{Faraj Bakhshinezhad}
\email{faraj.bakhshinezhad@oeaw.ac.at}
\affiliation{Institute for Quantum Optics and Quantum Information, Austrian Academy of Sciences, Boltzmanngasse 3, 1090 Vienna, Austria}

\author{Philipp Sch{\"u}ttelkopf}

\author{Fabien Clivaz}
\email{fabien.clivaz@oeaw.ac.at}
\affiliation{Institute for Quantum Optics and Quantum Information, Austrian Academy of Sciences, Boltzmanngasse 3, 1090 Vienna, Austria}

\author{Marcus Huber}
\email{marcus.huber@univie.ac.at}
\affiliation{Institute for Quantum Optics and Quantum Information, Austrian Academy of Sciences, Boltzmanngasse 3, 1090 Vienna, Austria}


\begin{abstract}
Practical implementations of quantum technologies require preparation of states with a high degree of purity---or, in thermodynamic terms, very low temperatures. Given finite resources, the Third Law of thermodynamics prohibits perfect cooling; nonetheless, attainable upper bounds for the asymptotic ground state population of a system repeatedly interacting with quantum thermal machines have been derived. These bounds apply within a memoryless (Markovian) setting, in which each refrigeration step proceeds independently of those previous. Here, we expand this framework to study the effects of memory on quantum cooling. By introducing a memory mechanism through a generalized collision model that permits a Markovian embedding, we derive achievable bounds that provide an exponential advantage over the memoryless case. For qubits, our bound coincides with that of heat-bath algorithmic cooling, which our framework generalizes to arbitrary dimensions. We lastly describe the adaptive step-wise optimal protocol that outperforms all standard procedures.
\end{abstract}
\maketitle

\section{Introduction}\label{sec:introduction}
Cooling a physical system is a thermodynamic task of fundamental and practical importance~\cite{Linden_2010,Goold_2016,Masanes_2017,Wilming_2017,Binder2018,Guryanova2020}. On the foundational side, the cooling potential is limited by the Third Law of thermodynamics, which posits the necessity of an infinite resource to be able to cool perfectly~\cite{Freitas2018}. This resource is subject to trade-offs: absolute zero is attainable in finite time given an infinitely-large environment; alternatively, given a finite energy source, one can only perfectly cool asymptotically. Practically, one cannot utilize an infinite resource, so the concern turns to: how cold can a system be prepared given resource constraints?

Formulating a theory with such constraints is typically scenario-dependent; nonetheless, one aims to develop theories that are widely applicable. For example, resource theories of quantum thermodynamics permit energy-conserving unitaries between the system and a thermal environment~\cite{Ng2018,Lostaglio_2019}. Analyzing the transformations for various environments and dynamical structures illuminates thermodynamic limitations. 

Recent work has examined the task of quantum cooling in such a setting~\cite{Clivaz_2019L,Clivaz_2019E}; the main result posits a universal bound for the ground state population of the system in the infinite-cycle limit. However, these results are derived in a memoryless (Markovian) setting, which is often not well-justified in experimental platforms where memory effects can affect the performance. For instance, Landauer's principle~\cite{Landauer_1961} can be violated in the non-Markovian regime~\cite{Pezzutto_2016,Man_2019}. 

A natural follow-up is to examine the role of memory in quantum cooling. Depending on the task and level of control, memory effects can have a detrimental or advantageous impact~\cite{Ishizaki_2009,Huelga_2013,Schmidt_2015,Bylicka_2016,Iles-Smith_2016,Kato_2016,Cerrillo_2016,Basilewitsch_2017,Naghiloo_2018,Fischer_2019}; nonetheless, applications highlight the potential to be unlocked by controlling the memory via reservoir engineering~\cite{Biercuk2009,Barreiro2010,Geerlings2013}. Attempts to generalize thermodynamics to the non-Markovian setting include trajectory-based dynamical unravelings~\cite{Strunz_1999,Jack_2000} and those based on the operational process tensor formalism~\cite{Pollock2018A,Romero_2018,Romero_2019,Strasberg_2019E1,Strasberg_2019E2,Strasberg_2019L,Romero_2020}, among others~\cite{Strasberg_2016,Whitney_2018}. However, such general approaches typically obscure insight regarding the crucial resources; it is often unclear whether reported ``quantum advantages'' are due to genuinely quantum effects (e.g., coherence) or memory. 

Here, we propose a mechanism for memory through a generalized collision model~\cite{Ciccarello2013,Lorenzo2017,Lorenzo2017A}, which---while not fully general---permits fair comparison between various memory structures. We show that in the asymptotic limit, the memory depth of the protocol plays a critical role and leads to exponential improvement over the Markovian case. Our results coincide with the limits of heat-bath algorithmic cooling protocols~\cite{Boykin2002,Schulman2005,Baugh2005,RaeisiPRL2015,Rodriguez-Briones2016,Alhambra_2019,RaeisiPRL2019,Kose_2019,Rodriguez-Briones2020Thesis} for qubit targets and our framework both unifies and generalizes this setting, applying to all system and environment structures. 


\section{Task: Cooling a quantum system}\label{sec:task}

A physical system is never isolated, which necessitates working within the theory of open systems, where the joint system -environment are closed, but environmental degrees of freedom are disregarded. Arbitrary environments permit perfect cooling with finite resources, as any physical transformation on a quantum system can be realized unitarily with a sufficiently-large environment; thus, further restrictions are necessary. 

We consider a system, $S$, and environment, $E$, with Hamiltonians $H_S$ and $H_E$, respectively. The system and environment begin uncorrelated and in equilibrium at inverse temperature $\beta := \tfrac{1}{k_B T}$. The joint system-environment evolves unitarily, with the system dynamics between the initial time and a later one $t$ described by the dynamical map, $\varrho^{(t)}_S(\beta) := \Lambda^{\tiny{(t)}}[\tau_S^{\tiny{(0)}}(\beta)]$, defined such that:
\begin{align}\label{eq:dynamicalmapmain}
    \varrho^{(t)}_S(\beta) = \ptr{E}{U^{(t)} (\tau_S^{\tiny{(0)}}(\beta) \otimes \tau_E^{\tiny{(0)}}(\beta))U^{\tiny{(t)} \dagger}},
\end{align}
where $\tau_X(\beta)$ denotes the thermal state of system $X$ at inverse temperature $\beta$, i.e., $\tau_X(\beta) := \mathcal{Z}_X^{-1}(\beta) \exp(-\beta H_X)$ with partition function $\mathcal{Z}_X(\beta) := \tr{\exp(-\beta H_X)}$.

The aim is to prepare $\varrho^{(t)}_S(\beta)$ as cold as possible. Cooling a system, however, can have several meanings: for one remaining in equilibrium, it could mean driving it to a thermal state of lower temperature; otherwise, one could consider increasing its ground state population or purity, or decreasing its entropy or energy. As such notions are generally nonequivalent, any study of cooling depends on the objective function~\cite{Clivaz_2019L}. We focus on achieving states that majorize all other potential states in the joint unitary orbit; this ensures optimization of all Schur-convex/concave functions of the vector of populations ordered with respect to non-decreasing energy eigenstates, in particular all above notions of temperature.

\section{Framework: Collision models with memory}\label{sec:framework}

Above we have described one step of a cooling protocol. In thermodynamic tasks, however, one is oftentimes interested in the multiple-cycle behavior. Here, one faces a choice in how to proceed: one could implement each operation independently of those previous, i.e., completely refresh the environment between steps, leading to Markovian dynamics; or, one could temporally correlate the cycles, leading to non-Markovian dynamics. The main difficulty in treating the latter is that memory effects can arise in various ways: they can be the manifestation of initial correlations, recurring system-environment or intra-environment interactions; or any combination thereof. In any case, for multiple cycles, the dynamical map in Eq.~\eqref{eq:dynamicalmapmain} fails to completely describe the system dynamics, since system-environment correlations can influence later evolution, in contradistinction to the Markovian setting, where the environment is entirely forgotten between steps. In general, one must track all system-environment degrees of freedom to describe the system evolution, which becomes unfeasible. Thus, we seek a framework that permits tractable memory and fair comparison between different memory structures. 

We propose a microscopic model for the environment and its interactions with the system. We consider a $d_S$-dimensional system with $H_S = \sum_{i=0}^{d_S-1} E_{i} \ket{i} \bra{i}_S$ and assume the environment comprises a number of identical units---which we call \emph{machines}---each being a $d_M$-dimensional quantum system with associated Hamiltonian $H_M = \sum_{i=0}^{d_M-1} \mathcal{E}_{i} \ket{i} \bra{i}_M$. We order Hamiltonians with respect to non-decreasing energies, and set $E_0=\mathcal{E}_0=0$ and $\mathcal{E}_\mathrm{max} =\mathcal{E}_{d_M-1}$. Assuming that the dynamics proceeds via successive unitary ``collisions'' between the system and subsets of machines yields a collision model with memory. 

The memory effects that arise from endowing such models with various dynamical structures have been examined: considerations include initially correlated machines~\cite{Rybar2012, Bernardes2014}, inter-machine~\cite{Ciccarello2013, McCloskey2014, Lorenzo2016, Cakmak2017, Campbell2018} or repeated system-machine collisions~\cite{Grimsmo2015, Whalen2017}, or hybrid variations~\cite{Kretschmer2016,Lorenzo2017A, Lorenzo2017,Ciccarello2017}. In certain cases, the model exhibits finite-length memory~\cite{Taranto_2019L,Taranto_2019A,Taranto2019S,TarantoThesis}. In the limit of many machines, the system is expected to interact with only mutually-exclusive subsets of machines; since any used machines never play a subsequent role, one yields a microscopic picture of Markovian evolution that gives a Lindbladian master equation in the continuous-time limit~\cite{Rau1963,Ziman2002,Scarani2002,Ziman2005}. 

\begin{figure}[t!]
\centering
\includegraphics[width=\linewidth]{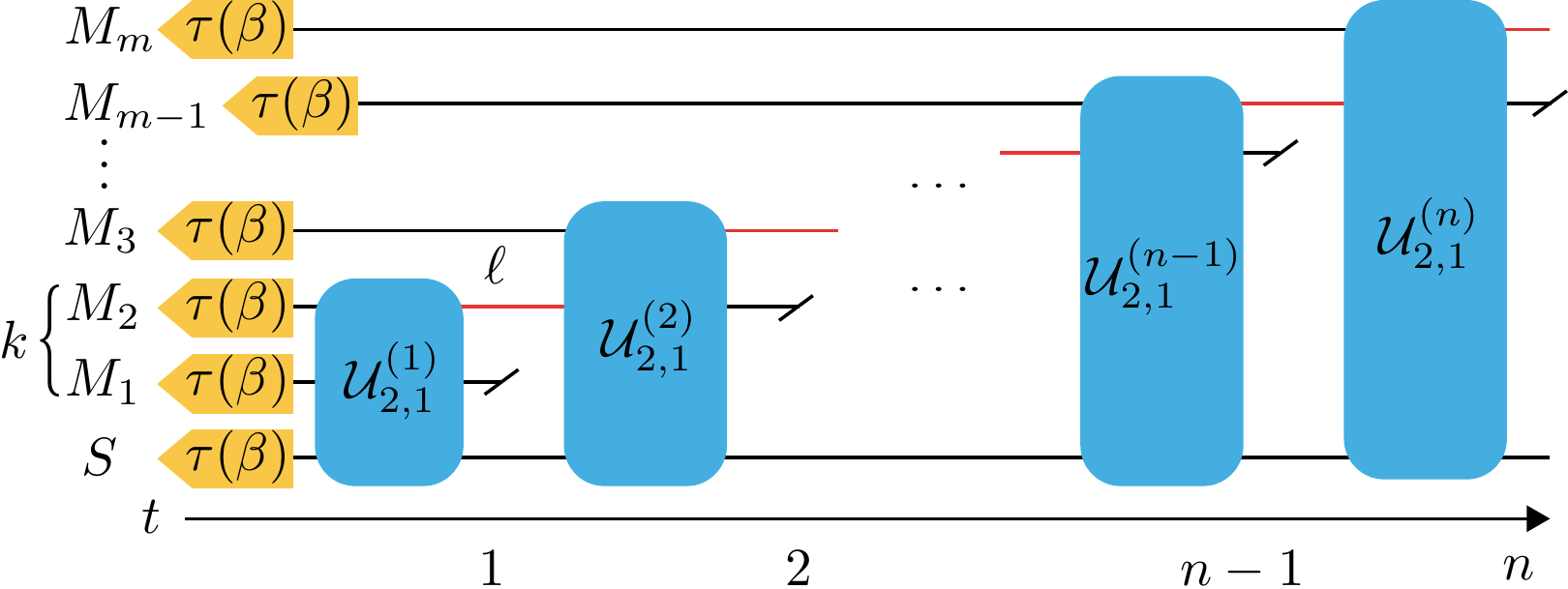}
\caption{ \label{fig:collisioncircuit}\textit{Collision model with memory.} At each step, the system $S$ interacts unitarily with $k$ machines, of which $\ell$ carry forward memory (red lines). Here we illustrate $k=2, \ell=1$, with $m$ the total number of machines used by timestep $n$.}
\end{figure} 

Although not fully general, this setting captures tractable non-Markovian dynamics. In this Letter, we will analyze the memory effects that arise from repeated system-machine interactions (see Fig.~\ref{fig:collisioncircuit}). More precisely, we consider $k$ machines to interact with the system between timesteps, with some $\ell \leq k$ of these carrying memory forward; this reduces to a Markovian protocol involving $k$ machines for $\ell=0$. The assumptions are that the system and all machines begin uncorrelated, and there are no interactions between memory-carrying machines and fresh ones other than those involving the system. These are valid whenever the memory-carrying machines relax much slower than those that rethermalize between steps. We can vary the number of machines in each interaction, $k$~\footnote{One could consider the restricted case of $(k+1)$-partite system-machine interactions that are decomposable into sequences of $p$-partite interactions for $p \leq k$. We do not make this restriction and allow any multi-partite interaction to be genuinely so.}, the number of memory carriers, $\ell$, the initial temperature, $\beta$, and the Hamiltonians.

This generic framework applies to a wide range of protocols. For instance, one can compare adaptive strategies, where different unitaries are performed between steps, versus non-adaptive ones, where a fixed dynamics is repeated. Additionally, one can restrict the allowed unitaries, such as limiting the set from general ``coherent'' ones (that require an external energy source) to ``incoherent'' energy-conserving transformations (where the cooling resource is an additional hot bath)~\cite{Clivaz_2019E,Aberg_2014}. Lastly, one could allow the memory structure itself to be adaptive, where $k$ and $\ell$ vary between times; we do not consider this and instead focus on cooling limits for fixed structures. A choice of $k$ and $\ell$, along with the system and machine dimensions, determines the control complexity afforded to the experimenter: intuitively, $k$ is related to spatial complexity and $\ell$ to temporal. We now compare the achievable cooling of a system for different memory structures. 


\section{Memory-enhanced cooling}\label{sec:main}

The fundamental Markovian cooling bounds have been derived in Refs.~\cite{Clivaz_2019L,Clivaz_2019E}. The optimally-cool system state at any finite time depends upon the energy-level structure between the system and machines and the level of control. However, in the asymptotic limit of Markovian operation, the vector of eigenvalues of the asymptotic state (in any aforementioned control paradigm) is majorized by that of
\begin{gather}\label{eq:markovianasymptoticstate}
    \varrho^*_S(\mathcal{E}_\textup{max},\beta,k) = \sum_{n=0}^{d_{S} -1}\frac{e^{-\beta n k{\mathcal{E}}_{\textup{max}}}}{\mathbbm{Z}_{\tiny{S}}(\beta,k\mathcal{E}_\textup{max})} \, \ket{n}\bra{n}_S,
\end{gather}
whenever the initial state $\tau_S^{(0)}(\beta)$ is majorized by $\varrho^*_S(\mathcal{E}_\textup{max},\beta,k)$; here $\mathbbm{Z}_{\tiny{X}}(\beta,\mathcal{E}):=\sum_{n=0}^{d_{\tiny{X}} -1}e^{-\beta n \mathcal{E}}$ is a quasi-partition function (depending only on the maximum energy gap of each machine). The state in Eq.~\eqref{eq:markovianasymptoticstate} is attainable with coherent control, positing the ultimate Markovian cooling limit.

The intuition is that the optimal protocol reorders the eigenvalues of the system and relevant machines at each step such that the maximum population is placed into the ground state subspace of the system, the second largest into the first excited state subspace, and so on. When this cycle is repeated with fresh machines at each timestep, the asymptotic state looks as if it had interacted with only the qubit subspace of each machine with maximum energy difference. However, the result cannot immediately be extended to the non-Markovian regime, as its derivation relies on an inductive argument on the system state at each step; for non-Markovian dynamics, this cannot be expressed in terms of the previous state, posing a logical roadblock. 

Whenever $\ell>0$ the generalized collision model is non-Markovian. Nonetheless, a relevant result states that such non-Markovian collision models can be lifted to a Markovian dynamics on a larger state space~\cite{Campbell2018}. For a system interacting with $k$ machines at each step, of which $\ell$ feed forward, the dynamics can be embedded into a Markovian one by considering the system and $\ell$ memory carriers as a unified system, which interacts at each step with $k-\ell$ fresh machines; such a process is said to have \emph{memory depth} $\ell$. In Appendix~\ref{app:markovianembedding}, we detail the Markovian embedding, which leads to the following results. 
 

\subsection{Asymptotic cooling advantage}\label{subsec:asymptotic}

We now present the universal cooling bound for the non-Markovian collision model in the infinite-cycle limit:

\begin{thm}\label{thm:asymptoticmain}
For any $d_S$-dimensional system interacting at each step with $k$ identical $d_M-$dimensional machines, with $\ell$ of the machines (labeled $L$) used at each step carrying the memory forward, in the limit of infinitely many cycles:

\noindent i) The ground state population of $S$ is upper bounded by\vspace{-0.4em}
    \begin{gather}
    \label{eq: population bound kl}
    {p}^*(\mathcal{E}_\textup{max},\beta,k,\ell)= \big(\sum_{n=0}^{d_{{S}} -1}e^{-\beta n d_M^\ell(k-\ell){\mathcal{E}}_{\textup{max}}}\big)^{-1}.
\end{gather}
\noindent ii) The vector of eigenvalues of the output system state is majorized by that of the following attainable state 
\begin{gather}
 \label{eq: asymptotic state kl}
    \varrho^*_S(\mathcal{E}_\textup{max},\beta,k,\ell)\!=\!\hspace{-0.2em}\sum\limits_{n=0}^{d_{S} -1}\!\frac{e^{-\beta n d_M^\ell(k-\ell){\mathcal{E}}_{\textup{max}}}}{\mathbbm{Z}_{\tiny{S}}(\beta,d_M^\ell(k-\ell){\mathcal{E}}_{\textup{max}})}\,\ket{n}\bra{n}_{\tiny{S}},\!
\end{gather}
whenever the initial state $\tau_S(\beta)\otimes\tau_{M}(\beta)^{\otimes \ell}$ is majorized by 
\begin{gather}\label{eq:slmajorizedinitstate}
 \varrho^*_{SL}(\tiny{\mathcal{E}_\textup{max},\beta,k,\ell})\!=\!\hspace{-1em}\sum_{n=0}^{\tiny{d_{SL}-1}}\tiny{ \hspace{-1em}\frac{e^{-\beta n (k-\ell){\mathcal{E}}_{\textup{max}}}}{\tiny{\mathbbm{Z}_{\tiny{SL}}(\beta, (k-\ell)\mathcal{E}_\textup{max})}}}\, \ket{n}\bra{n}_{\tiny{SL}}.\!
\end{gather}
\end{thm}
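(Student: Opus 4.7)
The plan is to leverage the Markovian embedding of collision models with memory depth $\ell$ described in Appendix~\ref{app:markovianembedding}: by grouping the system $S$ together with the $\ell$ memory-carrying machines $L$ into a composite system $SL$ of dimension $d_{SL}=d_S d_M^\ell$, the dynamics becomes Markovian, with $(k-\ell)$ fresh $d_M$-dimensional machines colliding with $SL$ at each step. Applying the Markovian asymptotic bound of Eq.~\eqref{eq:markovianasymptoticstate} on this enlarged system, under the substitutions $d_S\mapsto d_{SL}$ and $k\mapsto k-\ell$, immediately yields the $SL$-level majorization by $\varrho^*_{SL}$ from Eq.~\eqref{eq:slmajorizedinitstate}, using the theorem's initial-state hypothesis that $\tau_S(\beta)\otimes\tau_M(\beta)^{\otimes\ell}$ be majorized by $\varrho^*_{SL}$ so that the Markovian theorem on $SL$ can be invoked.

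With the $SL$-level bound in hand, I would then descend to $S$ by tracing out the $\ell$ memory carriers. The eigenvalues $q_n\propto e^{-\beta n(k-\ell)\mathcal{E}_\textup{max}}$ of $\varrho^*_{SL}$ are arranged in non-increasing order along the non-decreasing energy eigenstates of $H_{SL}$, so the largest $d_M^\ell$ populations sit in the $S$-ground block, the next $d_M^\ell$ in the first $S$-excited block, and so on. A short geometric-series calculation in each block then collapses $\mathrm{tr}_L\varrho^*_{SL}$ to the Gibbs-like state of Eq.~\eqref{eq: asymptotic state kl} with effective gap $d_M^\ell(k-\ell)\mathcal{E}_\textup{max}$, and the corresponding ground-state population of Eq.~\eqref{eq: population bound kl}. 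Attainability of $\varrho^*_S$ is inherited from the Clivaz construction that saturates the $SL$-level bound asymptotically.

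The nontrivial step is transferring the majorization relation from $SL$ down to $S$, since marginalization does not preserve majorization in general. My approach exploits the block structure via Ky Fan's variational characterization of top eigenvalues: for any $\sigma_{SL}$ whose eigenvalue vector is majorized by that of $\varrho^*_{SL}$, the sum of the top $k$ eigenvalues of $\mathrm{tr}_L\sigma_{SL}$ equals $\max_P\mathrm{tr}\bigl(\sigma_{SL}(P\otimes\mathbbm{1}_L)\bigr)$ over rank-$k$ projectors $P$ on $S$, which is bounded above by the sum of the top $kd_M^\ell$ eigenvalues of $\sigma_{SL}$ since $P\otimes\mathbbm{1}_L$ has rank $kd_M^\ell$ on $SL$; the $SL$-majorization hypothesis upper-bounds the latter by the sum of the top $kd_M^\ell$ eigenvalues of $\varrho^*_{SL}$, and by block placement this equals exactly the sum of the top $k$ eigenvalues of $\varrho^*_S=\mathrm{tr}_L\varrho^*_{SL}$. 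Iterating over $k$ delivers the sought $S$-majorization, and in particular the stated bound on the ground-state population.

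The main obstacle I anticipate is rigorously justifying the block-ordering assumption when the $S$- and $L$-energies are not cleanly separated---e.g., if $E_1<\ell\mathcal{E}_\textup{max}$, then the first $d_M^\ell$ energy eigenstates of $H_{SL}$ need not all lie in the $S$-ground sector, and the clean block identification used to simplify $\mathrm{tr}_L\varrho^*_{SL}$ could break down. Since the theorem concerns eigenvalue vectors rather than placements on specific eigenstates, this should be resolvable by noting that permutations reordering populations within $SL$ are unitary and thus realizable within the coherent control paradigm, so the block structure can be imposed on the attained $\varrho^*_{SL}$ without loss of generality, preserving both the marginal calculation and the Ky Fan argument above.
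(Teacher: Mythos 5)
Your proposal is correct and follows the same overall architecture as the paper's proof---Markovian embedding of $S$ plus the $\ell$ memory carriers into $SL$, invocation of the Markovian asymptotic bound of Refs.~\cite{Clivaz_2019L,Clivaz_2019E} (Lemma~\ref{lem:Markov}) on the enlarged system colliding with $k-\ell$ fresh machines, and a final descent from the $SL$-level majorization to the $S$-level one. The one place you genuinely diverge is the descent step. The paper proves a standalone reduced-state majorization lemma (Lemma~\ref{lem:maj subsystem}) by working with states assumed diagonal in a product basis, explicitly constructing the permutation unitary that groups the $d_B$ largest eigenvalues into each $A$-block, and then verifying the partial-sum inequalities by hand; your route instead applies Ky Fan's maximum principle twice, bounding $\sum_{i\leq r}\lambda_i^{\downarrow}(\ptr{L}{\sigma_{SL}})=\max_P \tr{\sigma_{SL}(P\otimes\mathbbm{1}_L)}$ by the top $r d_M^{\ell}$ eigenvalues of $\sigma_{SL}$, hence of $\varrho^*_{SL}$, which by the block placement equals the top-$r$ sum for $\varrho^*_S$. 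Your argument buys something: it applies directly to an arbitrary (not necessarily diagonal) $\sigma_{SL}$ whose spectrum is majorized by that of $\varrho^*_{SL}$, and it makes rigorous in one line the step the paper only asserts (``one can show that it is possible to obtain $\sigma_A^{\textup{opt}}$ from a bipartite state that is diagonal in the same basis''); the paper's version, in exchange, exhibits the optimal unitary explicitly, which is then reused to define the step-wise optimal protocol. Your closing caveat about the global energy ordering of $H_{SL}$ versus the locally blocked ordering is exactly the issue the paper resolves with the reordering unitary $U_{SL}^{\textup{opt}}$ of Eq.~\eqref{eq:appmaj}, and your resolution (the reordering is unitary, so it is free in the coherent paradigm and leaves the eigenvalue vector untouched) coincides with the paper's.
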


\noindent \emph{Sketch of Proof.} We use the Markovian embedding to lift the non-Markovian dynamics of the target $S$ to a Markovian process for the target-plus-memory carriers $SL$ system, which interacts with $k-\ell$ fresh machines (which we label $R$) at each step. This implies that optimally cooling $SL$ is necessary to optimally cool $S$. From Ref.~\cite{Clivaz_2019L}, the asymptotically-optimal state of $SL$ has the same eigenvalue distribution as Eq.~\eqref{eq:slmajorizedinitstate}, whenever the initial $SL$ state is majorized by $\varrho^*_{SL}$, and is thus unitarily equivalent to it. As majorization concerns all partial sums, given that initial condition, whatever protocol one chooses to cool $S$, the asymptotic state cannot be colder than $\varrho_S^*$ (which is the coldest $S$ state in the unitary orbit of $\varrho_{SL}^*$). This implies that the asymptotic ground state population is upper bounded by $p^*$. See Appendix~\ref{app:proof}. \qed

There are many noteworthy points: firstly, the optimal ground state population is enhanced by $d_M^\ell$ compared to the Markovian case, highlighting the drastic role of memory; in particular, one achieves an exponential improvement in $\ell$. Secondly, as the factors in Eq.~\eqref{eq: asymptotic state kl} arise independently from various sources (i.e., $S$, $L$ and $R$), the bound extends to the case where $L$ is an arbitrary $d_L$-dimensional system and $R$ an arbitrary $d_M$-dimensional system (with maximum energy gap $\mathcal{E}_{\text{max}}'$), with $d_M^{\ell} \rightarrow d_L$ and $(k-\ell) \mathcal{E}_{\text{max}} \rightarrow \mathcal{E}_{\text{max}}'$. This clarifies that the asymptotic bound only depends on the dimension of $L$, not on its energy structure. Lastly, the asymptotic $SL$ state of Eq.~\eqref{eq:slmajorizedinitstate} is unitarily equivalent to a tensor product state that has Eq.~\eqref{eq: asymptotic state kl} as its reduced state on $S$. Nonetheless, throughout the cooling protocol correlations build up, due to the finite-time dependence on the energy structures of the systems involved, before dying out asymptotically; in Appendix~\ref{app:correlations}, we explore the role of correlations in more detail.

Returning to Theorem~\ref{thm:asymptoticmain}, Eq.~\eqref{eq: asymptotic state kl} allows us to compare limits for various $k,\ell, \beta$ and $\mathcal{E}_\textup{max}$ (see Fig.~\ref{fig:mainplot}):
\begin{cor}\label{cor:hierarchy}
The asymptotic hierarchy is determined via:
\begin{align}
    {\varrho}_S^*(&\mathcal{E}_\textup{max},\beta,k,\ell) {\prec}({\succ}){\varrho}_S^*(\mathcal{E}_\textup{max}',\beta',k',\ell') \notag  \\
  \textup{if}&~~ \beta(k-\ell)d_M^\ell \mathcal{E}_\textup{max}\, \leq (>)\, \beta'(k'-\ell')d_M^{\ell'}\mathcal{E}_\textup{max}'. 
\end{align}
\end{cor}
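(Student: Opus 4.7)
The plan is to exploit the fact that the asymptotic state in Eq.~\eqref{eq: asymptotic state kl} depends on $(\beta,k,\ell,\mathcal{E}_\textup{max})$ only through the single scalar $\alpha := \beta (k-\ell) d_M^\ell \mathcal{E}_\textup{max}$, and to show that this one-parameter family of diagonal states is totally ordered by majorization in $\alpha$. The corollary then follows immediately by comparing the $\alpha$-values assigned to the two parameter tuples on either side.

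Concretely, I would first rewrite $\varrho_S^*(\mathcal{E}_\textup{max},\beta,k,\ell) = \sum_{n=0}^{d_S-1} p_n(\alpha)\, \ket{n}\bra{n}_S$ with eigenvalues $p_n(\alpha) := e^{-n\alpha}/\mathbbm{Z}(\alpha)$ and $\mathbbm{Z}(\alpha) := \sum_{n=0}^{d_S-1}e^{-n\alpha}$. Since $\alpha \geq 0$ the $p_n(\alpha)$ are already listed in non-increasing order, so majorization between two such states reduces to the partial-sum inequalities $S_k(\alpha) \leq S_k(\alpha')$ for every $k \in \{0,\ldots,d_S-2\}$, where
\begin{align}
    S_k(\alpha) := \sum_{n=0}^{k} p_n(\alpha) = \frac{\sum_{n=0}^{k} e^{-n\alpha}}{\sum_{m=0}^{d_S-1} e^{-m\alpha}}.
\end{align}
Thus it suffices to show that $\alpha \mapsto S_k(\alpha)$ is non-decreasing for each $k$.

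For the monotonicity I would differentiate and rearrange the double sum in numerator and denominator, obtaining
\begin{align}
    \mathbbm{Z}(\alpha)^2\, \tfrac{d S_k}{d\alpha} = \sum_{n=0}^{k}\sum_{m=k+1}^{d_S-1} (m-n)\, e^{-(n+m)\alpha},
\end{align}
where all cross-terms with both indices in $\{0,\ldots,k\}$ cancel pairwise by antisymmetry under $n \leftrightarrow m$. Since in every surviving term $m > k \geq n$, the derivative is non-negative, establishing the desired monotonicity and hence the corollary.

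I do not anticipate any serious obstacle: once the asymptotic state is repackaged in terms of $\alpha$, the corollary reduces to a one-parameter monotonicity statement about truncated geometric distributions, verifiable by the short calculation above (or, equivalently, by a likelihood-ratio/stochastic-dominance argument that exploits the log-linearity of $e^{-n\alpha}$ in $n$). The main conceptual point is simply recognizing that $\beta$, $k$, $\ell$ and $\mathcal{E}_\textup{max}$ enter the asymptotic eigenvalues only through the combined scalar $\alpha$, so that the hierarchy among physically distinct parameter choices collapses to the single numerical comparison stated in the corollary.
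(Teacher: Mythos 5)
Your proposal is correct: the paper states Corollary~\ref{cor:hierarchy} without an explicit proof, treating it as an immediate consequence of the form of Eq.~\eqref{eq: asymptotic state kl}, and your argument supplies exactly the verification that is left implicit there---the eigenvalues depend on the parameters only through $\alpha=\beta(k-\ell)d_M^\ell\mathcal{E}_\textup{max}$, they are already sorted since $\alpha\geq 0$, and your derivative computation (with the antisymmetric cancellation of the $n,m\leq k$ cross-terms) correctly establishes that each partial sum is non-decreasing in $\alpha$. The only cosmetic issue is the reuse of $k$ as both the number of machines and the partial-sum index, which you should rename before writing this up.
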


\begin{figure}[t!]
\centering
\includegraphics[width=\linewidth]{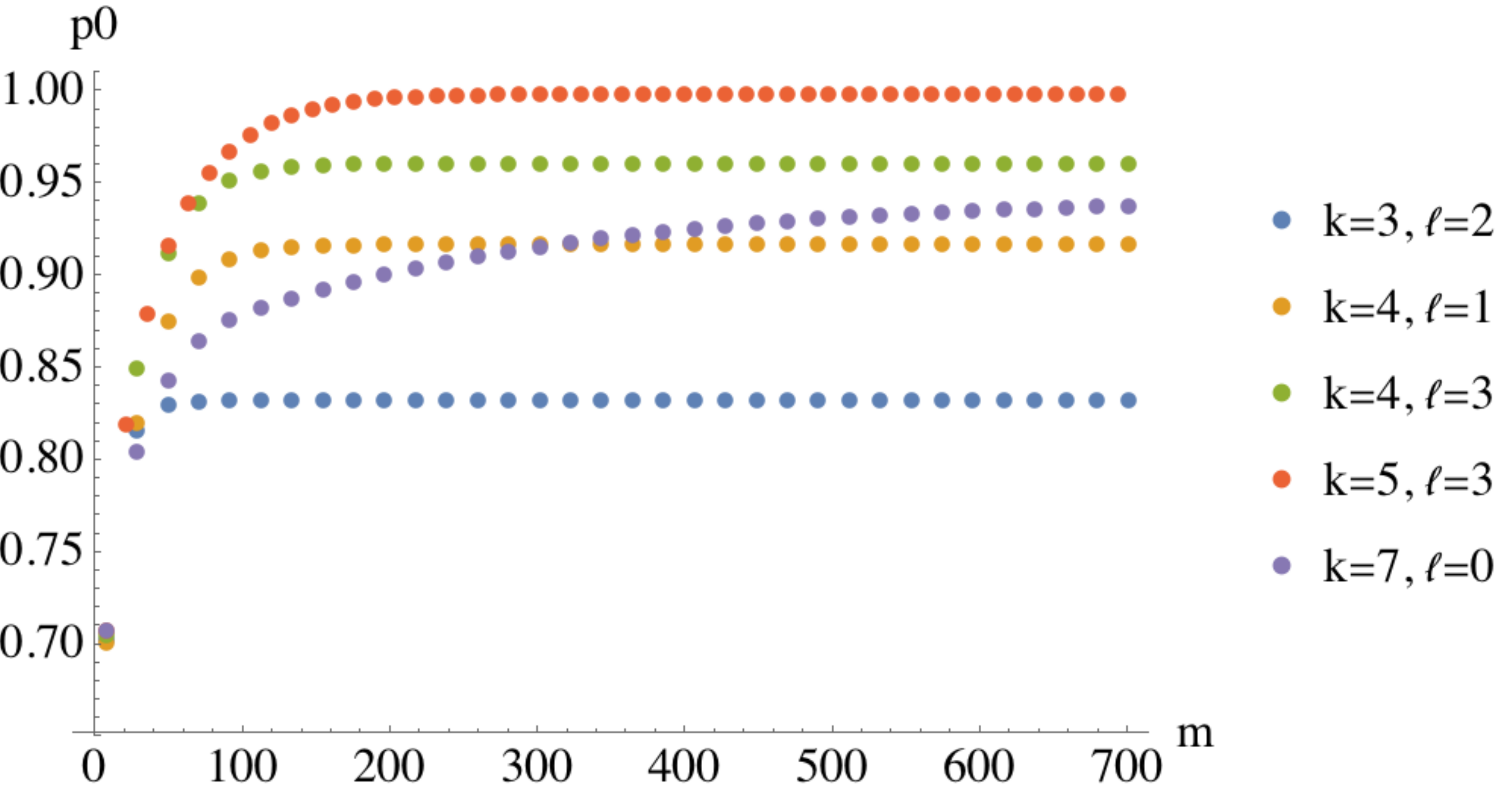}
\caption{ \label{fig:mainplot}\textit{Cooling behavior.} We simulate the ground state population after $m$ machines have been used in the step-wise optimal cooling protocol (see Theorem~\ref{thm:stepwiseoptimalmain}) for a qubit system and machines with fixed $\beta = 0.2, E_{\textup{max}} = 1$ and $\mathcal{E}_{\textup{max}} = 2$. The asymptotic hierarchy agrees with Corollary~\ref{cor:hierarchy}, whereas the complex short-term behavior can exhibit crossovers.} 
\end{figure} 

\subsection{Step-wise optimal protocol}\label{subsec:stepwiseoptimal}

The bound is achievable and one protocol to do so reorders the global eigenspectrum at each step such that they are non-increasing with respect to non-decreasing energy eigenstates of $SL$. In the last step, the protocol additionally reorders the eigenvalues of the obtained $SL$ state largest-to-smallest with respect to non-decreasing energy eigenstates of $S$. Precisely, at each step $j$ the system and memory carriers are optimally cooled via a unitary $V_{SLR}^{(j)}: \varrho_{SLR}^{(j)}=V_{SLR}^{(j)} \varrho_{SLR}^{(j-1)} V_{SLR}^{(j) \dagger}$ that acts as~\footnote{Here, $R$ refers to the fresh machines included at each step, with an implied identity map on all other systems.}
\begin{align}\label{eq:sortsl}
    \varrho_{SLR}^{(j)} = \sum_{\xi=0}^{d_S d_M^\ell - 1} \sum_{\omega=0}^{d_M^{k-\ell} - 1} \lambda^{\downarrow}_{\xi \cdot d_M^{k-\ell} + \omega} \ket{\xi \omega} \bra{\xi \omega},
\end{align}
where $\ket{\xi \omega} = \ket{\xi}_{SL}\otimes\ket{\omega}_R$ and $\lambda^{\downarrow}$ denotes the eigenvalues of $\varrho_{SLR}^{(j-1)}$ in non-increasing order. This unitary dissipates maximal heat into the machines that play no subsequent role, and thus at any finite timestep $j$, the protocol has achieved the coldest $SL$ state possible given its history, which is crucial for finite-time optimality. By implementing the sequence $\{ V_{SLR}^{(j)} \}_{j=1,\hdots,n}$, although the final (timestep $n$) $SL$ state $\varrho_{SL}^{(n)}$ generically exhibits correlations, there always exists a unitary $W^{(n)}_{SL}$ that ensures $S$ is optimally cool by further reordering the eigenvalues under the previous constraint; in the asymptotic limit, i.e., when $n \to \infty$, said unitary completely decorrolates $SL$ (whereas at finite times, some correlation generically remains). However, while this strategy attains the optimally-cool $S$ for any final timestep $n$, this protocol is not necessarily \emph{step-wise} optimal.

To derive the step-wise optimal protocol, consider the unitary $W_{SL}^{(j)}: \sigma_{SL}^{(j)}=W_{SL}^{(j)} \varrho_{SL}^{(j)} W_{SL}^{(j) \dagger}$ that acts as
\begin{align}\label{eq:sorts}
    \sigma_{SL}^{(j)} = \sum_{\mu=0}^{d_S - 1} \sum_{\nu=0}^{d_M^{\ell} - 1} \lambda^{\downarrow}_{\mu \cdot d_M^{\ell} + \nu} \ket{\mu \nu} \bra{\mu \nu},
\end{align}
where $\ket{\mu \nu} = \ket{\mu}_{S}\otimes\ket{\nu}_L$ and $\lambda^{\downarrow}$ here denotes the eigenvalues of $\varrho_{SL}^{(j)}$ in non-increasing order. $W_{SL}^{(j)}$ optimally cools $S$ given any $SL$ state by unitarily transferring maximal entropy towards $L$; thus, if we apply $W_{SL}^{(j)}$ \emph{at each step $j$} after having optimally cooled $SL$ via $V^{(j)}_{SLR}$ until then, i.e., implement $U^{(j)}_{SLR} = W^{(j)}_{SL} V^{(j)}_{SLR}$, $S$ is guaranteed to be optimally cool. This leads to the following, proven in the Appendix~\ref{app:finitetime}, where we examine finite-time behavior.

\begin{thm}[Step-wise optimal cooling protocol]\label{thm:stepwiseoptimalmain}
By applying $U^{(j)}_{SLR}$ described above at each step, the cooling protocol is step-wise optimal regarding the temperature of the system.
\end{thm}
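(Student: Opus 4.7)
The plan is to prove step-wise optimality by induction on the step index $j$, exploiting a clean decomposition of $U^{(j)}_{SLR} = W^{(j)}_{SL} V^{(j)}_{SLR}$ into two components that target complementary objectives. First I would formalize the claim: for every $j$, $\varrho_S^{(j)}$ majorizes the $S$-reduced state produced by any alternative unitary sequence $\{\tilde U^{(i)}_{SLR}\}_{i=1}^j$ compatible with the memory structure of Fig.~\ref{fig:collisioncircuit}, thereby optimizing every Schur-convex notion of temperature in the sense described in Sec.~\ref{sec:task}.

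Two observations would drive the argument. First, since $W^{(j)}_{SL}$ is a unitary on $SL$ alone and the fresh $R$ machines are discarded after the step, applying it leaves the spectrum of $\varrho_{SL}^{(j)}$ invariant; hence the input $\sigma_{SL}^{(j)} \otimes \tau_M(\beta)^{\otimes(k-\ell)}$ to the next step has the same spectrum as the version in which $W^{(j)}_{SL}$ was omitted. Because the action of $V^{(j+1)}_{SLR}$ is determined purely by its input's spectrum (it merely rearranges eigenvalues per Eq.~\eqref{eq:sortsl}), inserting $W^{(j)}_{SL}$ at step $j$ neither helps nor impedes the subsequent cooling of $SL$. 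Second, each component is individually optimal for its immediate target: $V^{(j)}_{SLR}$ produces the majorization-optimal $\varrho_{SL}^{(j)}$ among all $\ptr{R}{U(\varrho_{SL}^{(j-1)}\otimes \tau_M(\beta)^{\otimes(k-\ell)}) U^\dagger}$ as $U$ ranges over unitaries on $SLR$, which is precisely the Markovian-embedding statement already invoked for Theorem~\ref{thm:asymptoticmain} via Ref.~\cite{Clivaz_2019L}; meanwhile $W^{(j)}_{SL}$ yields the $S$-reduced state that majorizes any $\ptr{L}{\tilde W \varrho_{SL}^{(j)} \tilde W^\dagger}$ as $\tilde W$ ranges over unitaries on $SL$.

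These observations stitch together inductively: if $\varrho_{SL}^{(j-1)}$ has the majorization-optimal spectrum, then $W^{(j-1)}_{SL}$ preserves it, $V^{(j)}_{SLR}$ promotes it to step $j$, and $W^{(j)}_{SL}$ converts the spectrally optimal $\varrho_{SL}^{(j)}$ into the coldest reachable $\varrho_S^{(j)}$. The main obstacle is the second optimality statement -- that $W^{(j)}_{SL}$ extracts the coldest $S$-marginal from the unitary orbit of $\varrho_{SL}^{(j)}$. A natural route is a Ky~Fan-style argument: for each $r \in \{1, \dots, d_S\}$, the sum of the top $r$ eigenvalues of $\ptr{L}{\tilde W \varrho_{SL}^{(j)} \tilde W^\dagger}$ is bounded above by the sum of the top $r\,d_M^\ell$ eigenvalues of $\varrho_{SL}^{(j)}$, with equality attained precisely by the arrangement of Eq.~\eqref{eq:sorts}. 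Saturating every such partial sum \emph{simultaneously} -- not merely the $r=1$ bound on the ground-state population -- is the technical crux that promotes the conclusion from a population-only statement to full majorization, and hence to optimality with respect to every Schur-convex thermodynamic objective.
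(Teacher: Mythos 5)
Your overall strategy is the same as the paper's: embed the dynamics as a Markovian process on $SL$, show that the protocol reaches the majorization-optimal $SL$ state at every step, and then extract the optimal $S$ marginal via a reduced-state majorization argument (your Ky~Fan partial-sum bound is exactly the content of the paper's Lemma~\ref{lem:maj subsystem}, proved there by explicitly constructing the sorting permutation and verifying all partial sums). Your observation that $W^{(j)}_{SL}$ is spectrum-preserving and therefore cannot interfere with the subsequent cooling of $SL$ is also correct, and is implicit in the paper, whose appendix packages $W^{(j)}_{SL}V^{(j)}_{SLR}$ into a single sorting unitary on $SLR$ that orders the eigenvalues lexicographically by $S$, then $L$, then $R$; the resulting $SL$ and $S$ marginals coincide with those of your two-stage decomposition.

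However, your inductive step has a gap as stated. Your second observation establishes that $V^{(j)}_{SLR}$ is optimal among all unitaries applied to the \emph{fixed} input $\varrho^{(j-1)}_{SL}\otimes\tau_M(\beta)^{\otimes(k-\ell)}$; this only rules out competitors that agree with the optimal protocol through step $j-1$ and deviate at step $j$. To defeat an arbitrary competitor, whose state $\tilde\varrho^{(j-1)}_{SL}$ is merely \emph{majorized by} the optimal $\varrho^{(j-1)}_{SL}$ rather than equal to it, you need two further ingredients: (i) stability of majorization under tensor products, so that $\varrho^{(j-1)}_{SL}\otimes\tau_R\succ\tilde\varrho^{(j-1)}_{SL}\otimes\tau_R$ when both sides are padded with the same fresh thermal machines (the paper invokes this explicitly, citing Bondar), and (ii) the statement that if $\rho_{SLR}\succ\sigma_{SLR}$ then the optimally sorted and reduced $SL$ state of $\rho_{SLR}$ majorizes the reduced state of $\sigma_{SLR}$ under \emph{any} unitary, which is Lemma~\ref{lem:maj subsystem} applied with $A=SL$ and $B=R$. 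With those two facts inserted, your induction closes and the argument coincides with the paper's proof.
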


\section{Relation to heat-bath algorithmic cooling}\label{sec:hbac}

Above we have derived the cooling limit in a controlled non-Markovian setting; through the Markovian embedding, we can further make direct connection with heat-bath algorithmic cooling \textbf{(HBAC)}~\cite{Boykin2002,Schulman2005,Baugh2005,RaeisiPRL2015,Rodriguez-Briones2016,Alhambra_2019,RaeisiPRL2019,Kose_2019,Rodriguez-Briones2020Thesis}, the limitations of which align with our results for qubit targets. Here, one cools a ``target'' system by cooling a larger ensemble of ``compression'' systems~\footnote{Together, the target and compression systems constitute what is often called the ``computation'' system.} via interactions with ``reset'' systems that rethermalize between steps. This permits better cooling than cooling the target alone with only reset systems as a resource; indeed, HBAC protocols are non-Markovian and a special case of our framework, which treats the compression/refrigerant systems as memory-carrying machines and the reset systems as fresh machines, as detailed below and in Appendix~\ref{app:hbac}.

i) Each of the target, memory carrier (compression/refrigerant), and reset systems can comprise multiple subsystems of arbitrary dimension, with arbitrary energy spectra and initial temperatures, which determines the asymptotic hierarchy for different strategies. In contrast, many HBAC studies focus only on target and reset qubits~\cite{Boykin2002,Schulman2005,RaeisiPRL2019}; although some consider qudit compression~\cite{Rodriguez-Briones2016} and reset systems~\cite{RaeisiPRL2015}, no HBAC study has shown results pertaining to the general qudit-qudit-qudit case. ii) Our results are based on majorization (as are those in Refs.~\cite{Schulman2005,RaeisiPRL2015}), and therefore applicable to more general notions of cooling than the often-considered ground state population (e.g., in Refs.~\cite{Boykin2002,Baugh2005,Rodriguez-Briones2016}), which crucially differ for high-dimensional systems~\cite{Clivaz_2019L}. This is important for quantum computing---for which cooling is a critical requirement---where high-dimensionality can simplify logical structures~\cite{Lanyon2009,Babazadeh2017,Imany2019}. iii) Finally, our results extend the partner-pairing algorithm---introduced to maximize the ground state population of a qubit in Ref.~\cite{Schulman2005}---to the most general setting (described above). The partner-pairing algorithm is step-wise optimal with a complexity that scales polynomially; our protocol achieves the same scaling, as the sorting required at each step can be achieved with a single operation. Although these operations depend on the global state at each step, in Appendix~\ref{app:hbac} we present a simple robust algorithm (based on one presented in Ref.~\cite{RaeisiPRL2019}) that uses only a fixed state-independent two-body interaction to reach the asymptotically-optimal state.

By contextualizing HBAC within the framework of collision models with memory, our work provides both a unification and generalization of HBAC. Moreover, our approach lends itself to modeling realistic HBAC experiments, where reset systems only partially thermalize, as considered in Ref.~\cite{Alhambra_2019}.


\section{Conclusions}\label{sec:conclusions}

In this Letter, we have put forward a framework for consistently dealing with memory when cooling quantum systems; indeed, the generalized collision model proposed is versatile enough to analyze the role of memory in various thermodynamic tasks. In doing so, we have revealed the potential for exponential improvement in the reachable ground state population (and more general notions of cooling), yielding drastic enhancement already for modest memory depths. Through a Markovian embedding of our framework, we could connect our framework with HBAC, demonstrating the latter to be a particular class of non-Markovian dynamics. Our results can be read as a generalization of HBAC applicable to arbitrary target and compression systems and bath spectra; by putting all HBAC protocols on an equal footing, our work opens the door to comparative studies that can now be made fairly. Moreover, we clarify the origin of the advantages that make HBAC so effective. Together with that of Refs.~\cite{Clivaz_2019L,Clivaz_2019E}, our work unifies HBAC with the resource theory of thermodynamics, as all results can be achieved either via coherent control or energy-conserving unitaries on enlarged systems. 

The exponential improvement with respect to the memory carriers stands in contrast to the only linear enhancement in the number of rethermalizing systems, highlighting the importance of controllable memory. Thus, given the ability to perform $(k +1)$-partite interactions, having $\ell = k-1$ of these systems as memory carriers is both the minimum requirement (as the system-and-memory must be open, otherwise any cooling ability rapidly diminishes) and, moreover, the optimal configuration. In particular, this implies that one can achieve the exponential advantage via interactions involving the system and memory carriers and \emph{only one additional} reset system. Of course, if $\ell$ is large, the ability to implement complex many-body interactions presents a difficult challenge. To this end, we have developed an explicit protocol (see Appendix~\ref{app:hbac}) which necessitates only a fixed, two-body interaction to achieve the fundamental bound. In particular, this implies that the attainability of the optimal asymptotic ground state population does not require implementing highly non-local unitaries. In fact, this robust protocol applies to arbitrary dimensional systems for the target, memory and reset parts, which is significant because high-dimensional systems are becoming increasingly relevant for fault-tolerant quantum computing~\cite{Lanyon2009,Babazadeh2017,Imany2019}---a major motivation for cooling quantum systems in the first place.

Our results consolidate the limits for quantum refrigeration in a setting with perfect control and high-quality isolation of the target and memory carriers. However, in most experimental scenarios, further challenges arise. We have assumed that the uncontrolled system-environment interactions are negligible compared to the controlled ones. For finite times, our results are reliable due to the exponential scaling, which makes it sufficient to run the protocol for a short time to approximate the asymptotics. An immediate concern is the impact of uncontrolled interactions: either to model imperfect target isolation or to better understand the realistic asymptotics, as for infinite steps, rethermalization of the target cannot always safely be neglected. Another assumption worth analyzing is that of perfect control: to implement a unitary perfectly, one requires both precise clocks~\cite{Malabarba_2015}, which have their own thermodynamic costs~\cite{Erker_2017,woods2019resource}, and high control over the interaction terms. While this is plausible for quantum computing devices, other systems are more challenging to control, particularly those with multi-partite interactions that are only perturbatively accessible~\cite{Mitchison_2016}. Lastly, a resource-theoretic approach has derived the minimum amount of energy required to implement transformations such as those considered here, which should also be accounted for~\cite{Chiribella2019}. Our framework lends itself to such pragmatic analyses of cooling; deriving similar bounds in realistic settings highlights the potential to elevate our results beyond fundamental limits and towards practical guidelines for quantum experiments. 

\begin{acknowledgments}
We thank Ralph Silva for discussions. This work was supported by the Austrian Science Fund (FWF) projects: Y879-N27 (START) and P31339-N27 and the FQXi Grant: FQXi-IAF19-03-S2.
\end{acknowledgments}

\onecolumngrid
\appendix

\section{Markovian embedding of collision models with memory}\label{app:markovianembedding}

We are interested in exploring analytically the effects of memory regarding the task of cooling a quantum system. We do not wish to allow for arbitrary non-Markovianity, as this would lead to an infinite resource in a sense that it allows us to cool the system to the ground state perfectly. Rather aim to obtain a cooling bound in the limit of infinite cycles for a particular class of non-Markovian dynamics, namely a generalized collision model endowed with memory. Such collision models with memory are quite general, simply assuming that between each step of the dynamics, the system interacts with $k$ constituent sub-machines (which altogether make up the full set of machines), of which some $\ell<k$ of these carry the memory forward. The Markovian setting is recovered for $\ell=0$. A schematic is provided in Fig.~\ref{fig:collisionschema}.

\begin{figure}[b!]
\centering
\includegraphics[width=\linewidth]{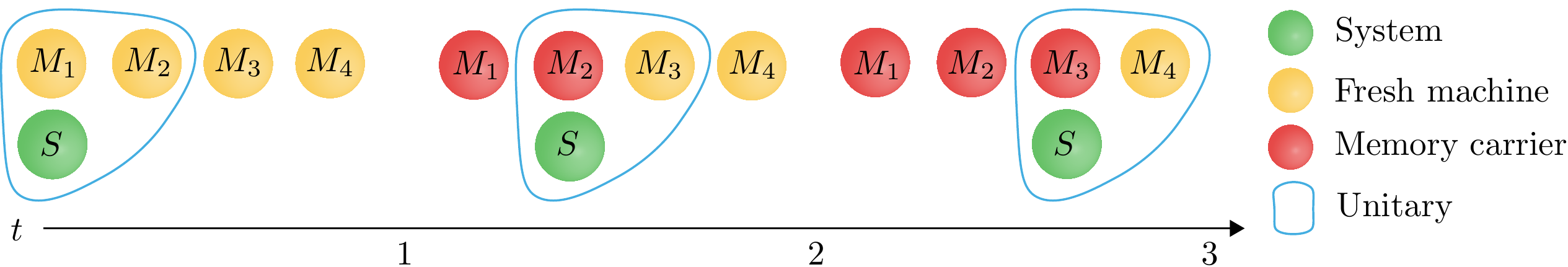}
\caption{ \label{fig:collisionschema}\textit{Generalized collision model with memory.} In the collision model picture, the environment that the system (green) interacts with is assumed to be comprised of individual sub-units, which we call machines. Between each step of dynamics, the system interacts unitarily (blue outline) with a subset of these; the cardinality of this set is labeled by $k$ throughout this work. A further subset of these previously-interacted-with machines of cardinality $\ell<k$ take part in the following interaction, becoming memory carriers (red). At each timestep, $k-\ell$ fresh machines are incorporated into the dynamics (yellow). Here we have shown $k=2, \ell=1$. }
\end{figure} 

We consider a target system of dimension $d_S$ and local Hamiltonian $H_S=\sum_{i=0}^{d_S -1}E_i \ketbra{i}{i}_S$, where $\{E_i\}$ are sorted in non-decreasing order and an environment comprising of a number of constituent identical machines of finite size $d_M$, each of which has the local Hamiltonian  $H_{M_j}=\sum_{i=0}^{d_{M_j} -1}\mathcal{E}_i\, \ketbra{i}{i}_{M_j}$, where $\{\mathcal{E}_i\}$ are also sorted in non-decreasing order. The system and all machines (i.e., the entire environment) begin uncorrelated in a thermal state with the same inverse temperature $\beta$:
\begin{align}
  \varrho_S^{(0)}(\beta) &=\tau_S(\beta) \quad \quad \text{and} \quad \quad
   \varrho_{M}^{(0)}(\beta)= \bigotimes_{j} \tau_{M_j}(\beta)
\end{align}
where $\tau_X(\beta):=\mathcal{Z}_X^{-1}\exp{(-\beta H_X)}$ with the partition function $\mathcal{Z}_X:=\mathrm{tr}[\exp{(-\beta H_X)}]$. 

Fixing $k$ and $\ell$ provides a particular dynamical structure of the non-Markovian process: it stipulates that at each timestep there are $k$ machines interacting with the system, of which $\ell$ are kept to perpetuate the memory. For example, after $n$ steps, the system state is
\begin{align}\label{eq:appcm}
 \varrho_S^{(n)}(\beta, k, \ell) = \mathrm{tr}_{M}[U_{k,\ell}^{(n)}\hdots U_{k,\ell}^{(1)}\, ( \tau_S(\beta)\bigotimes_{j=1}^{m} \tau_{M_j}(\beta) ) \,U_{k,\ell}^{(1) \dagger}\,\hdots \,U_{k,\ell}^{(n) \dagger}], 
\end{align}
where $U_{k,\ell}^{(n)}$ is an arbitrary unitary transformation between the target system and the $k$ machines labeled by $\{(n-1)(k-\ell)+1,\, \hdots , n(k-\ell)+\ell \}$ (an identity map is implied on the other machines) and $m:=k+(n-1)(k-\ell)$ is the total number of machines used by the protocol up until timestep $n$, which will be important in making finite time comparisons, as we do in Appendix~\ref{app:finitetime} (see Fig.~\ref{fig:collisioncircuit} for a graphical depiction in terms of a circuit diagram).

Importantly, the state of the system at any time is a function of the full microscopic energy structure $\{ E_i \}$ and $\{ \mathcal{E}_i\}$ (which we do not explicitly label for ease of notation), $\beta, k$ and $\ell$; the latter two numbers specify a particular dynamical structure in terms of which systems the unitaries act upon between timesteps. If $\ell=0$, the dynamics of the system is Markovian, since at each step, the system interacts with fresh machines that contain no memory of the past dynamics of the system. Otherwise, each of the machines interacts more than once with the target and only $k-\ell$ fresh machines are added into the interaction at each step. 

Eq.~\eqref{eq:appcm} highlights the restriction imposed by the assumption of generalized collision model dynamics from the fully general case of non-Markovian dynamics where the full system-environment must be tracked; in particular, a subset of the environment (the $k-\ell$ rethermalizing systems) is traced out between steps, rendering the dynamics tractable for small $k, \ell$. However, it is important to note that on the level of the system, memory effects still play a role. We first show that for $\ell > 0$ the dynamics considered is indeed non-Markovian in general.

To analyze the proposed setting, we need to look at the evolution of the entire joint system and machines to consider the effect of the memory in the protocol. For instance, consider rewriting Eq.~\eqref{eq:appcm} as a dynamical map taking the initial system state $\varrho_S^{(0)}(\beta) = \tau_S(\beta)$ to the later one under a generic dynamical structure determined by the choice of $k$ and $\ell$, i.e., define 
\begin{align}
    \Lambda^{(n:0)}_{k,\ell}(\beta)[ X_S] := \mathrm{tr}_{M}[U_{k,\ell}^{(n)}\hdots U_{k,\ell}^{(1)}\, ( X_S\bigotimes_{j} \tau_{M_j}(\beta) ) \,U_{k,\ell}^{(1) \dagger}\,\hdots \,U_{k,\ell}^{(n) \dagger}],
\end{align} 
where we have now included all of the machines in the environment and an identity map on those not taking part in the interactions until timestep $n$ is implied, such that
\begin{align}
    \varrho_S^{(n)}(\beta, k, \ell) = \Lambda^{(n:0)}_{k,\ell}(\beta)[ \varrho_S^{(0)}(\beta)].
\end{align}

Linearity, complete positivity and trace-preservation of the map $\Lambda^{(n:0)}_{k,\ell}(\beta)$ is guaranteed for any $\beta, k, \ell$ and $n$ by the fact that $S$ and $E$ begin initially uncorrelated and the dynamics evolves unitarily on the global level, before a final partial trace is taken over the environment degrees of freedom. Complete positivity is particularly important to ensure that the map takes valid quantum states to valid quantum states. In general, the global state $\varrho_{SM}^{(n)}(\beta, k, \ell)$, where $M$ labels the subset of the environment that has taken part non-trivially in the dynamics up until timestep $n$, involves correlations between $S$ and $M$; taking the partial trace over $M$ destroys all such correlations. Thus, one cannot, in general, describe the evolution of the system between multiple times as a divisible concatenation of completely positive and trace-preserving \textbf{(CPTP)} maps, i.e.,
\begin{align}\label{eq:appopcpdiv}
    \varrho_S^{(n)}(\beta,k,\ell) = \Lambda_{k,\ell}^{(n:0)}(\beta) [\varrho_S^{(0)}(\beta)] \neq \Lambda_{k,\ell}^{(n:t)}(\beta) \circ \Lambda_{k,\ell}^{(t:0)}(\beta) [\varrho_S^{(0)}(\beta)].
\end{align}

Here, we have defined $\Lambda_{k,\ell}^{(n:t)}(\beta)$ as the map that would be tomographically constructed if one were to discard the system at time $t$ (which is generally correlated to $M$) and perform a quantum channel tomography by preparing a fresh basis of input states (see Fig.~\ref{fig:cpdivisibility}); since the reprepared state is uncorrelated to $M$ by construction, the map $\Lambda_{k,\ell}^{(n:t)}(\beta)$ is guaranteed to be CPTP for any choice of parameters~\cite{milz_introduction_2017}. Testing for equality in Eq.~\eqref{eq:appopcpdiv} then corresponds to the operational notion of CP-divisibility proposed in Ref.~\cite{Milz2019CP}; importantly, its breakdown acts as a valid witness for non-Markovianity that is stricter than other notions of CP-divisibility proposed throughout the literature (in particular, it is stronger than that based on invertible CP-divisibility in any case where the dynamical maps are invertible). Of course, the fact that Eq.~\eqref{eq:appopcpdiv} is generally an inequality for generic dynamics does not imply that it is so for the particular optimal cooling dynamics described throughout this article; however, it is simple to show that the optimal cooling protocol indeed generates correlations between the system and machines that lead to a breakdown of (operational) CP-divisibility, and hence the particular dynamics considered throughout is inherently non-Markovian.

\begin{figure}[t!]
\centering
\includegraphics[width=\linewidth]{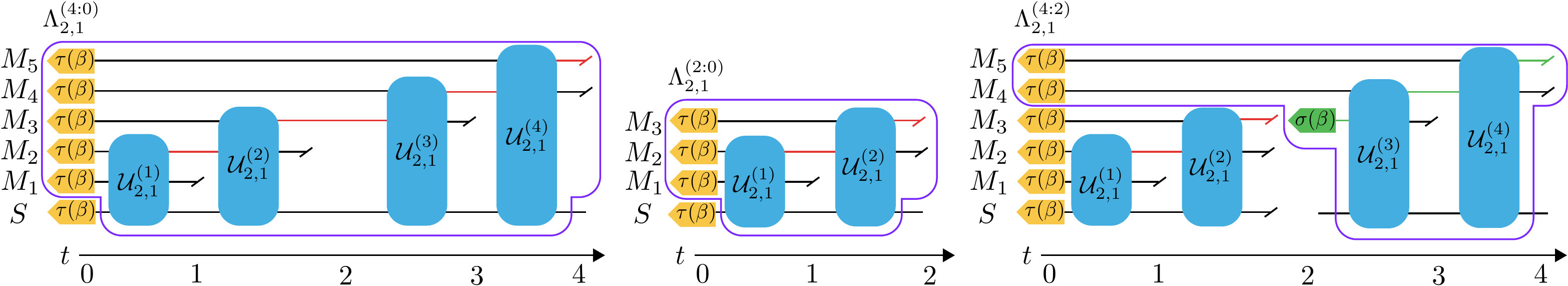}
\caption{ \label{fig:cpdivisibility}\textit{Operational CP-divisibility.} A breakdown of operational CP-divisibility is a witness for non-Markovianity. The test consists of tomographically constructing a set of maps describing the dynamics and checking the validity of Eq.~\eqref{eq:appopcpdiv}. We illustrate the scenario for a subset of times $t = 2, n =4$, with $k=2,\ell=1$: the left panel depicts the map $\Lambda^{(4:0)}_{2,1}$, which comprises everything within the purple border, i.e., the initial states of all machines, all joint unitary interactions, and the final partial trace. Note that the final memory carrier $M_5$ should continue forward, but the map $\Lambda^{(4:0)}_{2,1}$ does not capture this and traces out that subsystem. The middle panel depicts the map $\Lambda^{(2:0)}_{2,1}$. Both maps can be tomographically reconstructed by preparing a basis of input states at the initial time and measuring the outputs at time $t=4$ and $t=2$, respectively. As the system begins initially uncorrelated with the machines, the unitary dilation guarantees that the maps constructed are CPTP. The final map needed, $\Lambda^{(4:2)}_{2,1}$, is shown in the right panel. In general, at time $t=2$, the system is correlated to the machines, thus breaking the assumption of no initial correlations. An operational circumvention is to discard the system state at $t=2$ and reprepare a fresh one, thereby erasing all system-machine correlations. This has the effect of rendering the memory carriers into a fixed quantum state, which can be included in the description of $\Lambda^{(4:2)}_{2,1}$ to ensure that it is CP. When memory is present, tracing out the system at the intermediary timestep generally conditions the state of the memory carriers into a state that generally differs from the initial thermal $\tau(\beta)$, labeled here $\sigma(\beta)$, with the altered part of the evolution depicted in green; thus, the full dynamics generically differs from the concatenation. }
\end{figure} 

Nonetheless, the collision model memory structure that we have introduced crucially allows for a Markovian embedding that permits a significant simplification in the analysis~\cite{Campbell2018}. As mentioned previously, in general, one would need to track the total joint evolution throughout the entire protocol, which quickly becomes computationally exhaustive as $k$ grows. However, for a choice of $k$ and $\ell$, we can group the system $S$ and $\ell$ of the machines into a larger joint system, which we label $SL$, which interacts with $k-\ell$ fresh machines at each timestep; we label these fresh machines with $R$ as they model rethermalization of some of the machines with the environment. On the level of $SL$, the dynamics is Markovian, as the degrees of freedom carrying the memory have been included in the description of the target system. One can obtain the state of the overall $SL$ target by tracing out the $R$ machines at each step. We therefore have
\begin{equation}
 \varrho_{SL}^{(n)}(\beta, k, \ell) = \mathrm{tr}_{R}[\widetilde{U}_{k,\ell}^{(n)}\hdots \widetilde{U}_{k,\ell}^{(1)}\,(\varrho_{SL}^{(0)}(\beta,\ell)\otimes \varrho_{R}^{(0)}(\beta,k,\ell) )\,\widetilde{U}_{k,\ell}^{(1) \dagger}\,\hdots \,\widetilde{U}_{k,\ell}^{(n) \dagger}], 
\end{equation}
where $\varrho_{SL}^{(0)}(\beta,\ell):=\tau_S(\beta)\bigotimes_{j=1}^{\ell}\tau_{M_j}(\beta)$, $\varrho_{R}^{(0)}(\beta,k,\ell):=\bigotimes_{j=\ell+1}^{m} \tau_{M_j}$ and $\widetilde{U}_{k,\ell}^{(n)}$ is an arbitrary unitary interaction between $SL$ and $k-\ell$ fresh machines occurring immediately prior to timestep $n$ (see Fig.~\ref{fig:markovianembedding}). Due to the fact that no memory transportation occurs on the $SL$ level throughout the protocol, the full dynamics of the system and memory carriers is captured by the following concatenation of dynamical maps (\emph{c.f.} Eq.~\eqref{eq:appopcpdiv} in contrast):
\begin{equation}\label{eq:cpdivisibleSL}
 \varrho_{SL}^{(n:0)}(\beta,k,\ell) =  \widetilde{\Lambda}_{k,\ell}^{(n:t)} \circ \widetilde{\Lambda}_{k,\ell}^{(t:0)} [\varrho_{SL}^{(0)}(\beta,\ell)],
\end{equation} 
where $\widetilde{\Lambda}_{k,l}^{(n:t)}$ is a CPTP map that acts only upon $SL$ and depends on the unitary operators $\widetilde{U}_{k,l}^{(n)}, \hdots, \widetilde{U}_{k,l}^{(t)}$ and the initial state of the $k-\ell$ fresh machines taking part in each interaction. Thus the dynamics is (operationally) CP-divisible on the level of $SL$, and it is easy to see that it is even Markovian in the stronger sense provided in Refs.~\cite{pollock_operational_2018,Costa2016}.

\begin{figure}[t!]
\centering
\includegraphics[width=\linewidth]{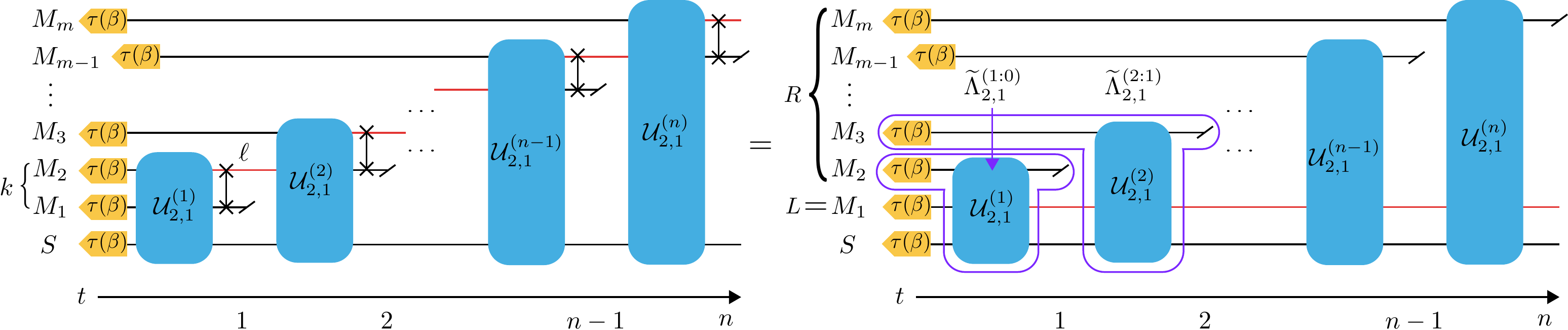}
\caption{ \label{fig:markovianembedding}\textit{Markovian embedding of generalized collision model.} The generalized collision models considered in this work can be embedded as a Markovian process by grouping together the system and memory carriers into a larger target system. On the left, we show that by considering the original circuit shown in Fig.~\ref{fig:collisioncircuit}, plus allowing for a swap interaction between relevant machines, the dynamics can be transformed into the circuit in the right panel. Here, we identify the memory carrier systems as $L$; the entire $SL$ system now interacts with fresh machines between each timestep, which we label $R$. The full dynamics of $SL$ can be described by a Markovian sequence of CPTP maps: here we show only the first two, $\widetilde{\Lambda}^{(1:0)}_{2,1}, \widetilde{\Lambda}^{(2:1)}_{2,1}$ within the purple borders, although the dynamics between any steps can be described similarly. Note that, in contrast to the dynamics of the system itself, the dynamical maps on the level of $SL$ contain the complete description of the process and Eq.~\eqref{eq:cpdivisibleSL} always holds; intuitively, this is because none of the systems carrying memory are artificially ``cut'' by the description of the dynamical map (see the partial traces on the red lines in Fig.~\ref{fig:cpdivisibility} for comparison).}
\end{figure} 

This ``Markovian embedding'' of the non-Markovian dynamics provides an opportunity to investigate the problem at hand with a simplified Markovian dynamics on the larger $SL$ system instead of complicated non-Markovian dynamics that occurs on the level of $S$. In the sense of Ref.~\cite{Campbell2018}, the number of memory carriers $\ell$ corresponds to the memory depth of the dynamics; intuitively, this is the number of additional subsystems that need to be included in the description of the system so that the dynamics is rendered Markovian.

\section{Proof of Theorem 1}\label{app:proof}

Here we prove Theorem~\ref{thm:asymptoticmain}. The proof makes use of the main result of Refs.~\cite{Clivaz_2019L,Clivaz_2019E}, which derive the ultimate cooling bounds for a Markovian protocol. We first embed the non-Markovian dynamics of $S$ as a Markovian one by considering the target system $SL$, before finding the optimally cool $SL$ state in the asymptotic limit, which we denote $\varrho_{SL}^*$. We then combine this result with the fact that there always exists a unitary that can finally be implemented on just $SL$ such that the reduced state of $S$ majorizes all of the possible reduced states of the system ${SL}$, as long as $\varrho_{SL}$ is majorized by $\varrho_{SL}^*$. This implies that the optimal asymptotic system state $\varrho_S^*$ can be calculated from the reduced state of any $\varrho_{SL}$ which has the same eigenvalue spectrum of the asymptotically optimal $SL$ state. 

Before we begin with the proof, we provide a definition of majorization for completeness: 
\begin{defn}
Given a vector of real numbers $\mathbf{a} \in \mathbbm{R}^d$, we denote by $\mathbf{a}^\downarrow$ the vector with the same components but sorted in non-increasing order. Given $\mathbf{a}, \mathbf{b} \in \mathbbm{R}^d$, we say that $\mathbf{a} \succ \mathbf{b}$ ($\mathbf{a}$ majorizes $\mathbf{b}$) iff
\begin{align}
    \sum_i^k a_i^\downarrow \geq \sum_i^k b_i^\downarrow \quad \forall \; k \in \{ 1, \hdots, d\}.
\end{align}
\end{defn}

\textit{Proof of Theorem~\ref{thm:asymptoticmain}}. We first perform a Markovian embedding of the non-Markovian collision model dynamics by considering the evolution of the larger target system $SL$ (with dimension $d_S d_M^\ell$); for a given number $\ell$ of memory carriers, the Markovian embedding corresponds to a memory depth of $\ell$ in the sense of Ref.~\cite{Campbell2018}, which is to say that by including the description of the $\ell$ memory carrying systems with that of the original target system $S$, the dynamics is rendered Markovian. This is because, at each step of the protocol, $SL$ interacts with $k-\ell$ \emph{fresh} machine systems (with total dimension $d_M^{k-\ell}$), which are subsequently discarded and play no further role in the dynamics.

In the Markovian regime, we can use the theorem of universal cooling bound presented in Ref.~\cite{Clivaz_2019L}, which holds for an arbitrary target system interacting with an arbitrary machine, which are initially in a thermal state with inverse temperature ${\beta}$, in the limit of infinite cycles. 

\begin{lem}[Markovian asymptotic cooling limit {[Theorem~1 in Ref.~\cite{Clivaz_2019L}]}] \label{lem:Markov}
For any $d_{\widetilde{S}}$-dimensional system with Hamiltonian $H_{\widetilde{S}}=\sum_{i=0}^{d_{\widetilde{S}} -1}{E}_i\, \ketbra{i}{i}_{\widetilde{S}}$ interacting with
a $d_{\widetilde{M}}$ dimensional machine with Hamiltonian $H_{\widetilde{M}} = \sum_{i}^{d_{\widetilde{M}}-1}\mathcal{E}_i \ket{i}\bra{i}_{\widetilde{M}}$ with $\{ E_i \}, \{ \mathcal{E}_i \}$ sorted in non-decreasing order, in the limit of infinite cycles, 
\begin{itemize}
\item The ground state population of the target system $\widetilde{S}$ is upper bounded by
\begin{equation}
    \widetilde{p}^*(\beta)= (\sum_{n=0}^{d_{\widetilde{S}} -1}e^{-\beta n \widetilde{\mathcal{E}}_{\textup{max}}})^{-1}
\end{equation}
where $\widetilde{\mathcal{E}}_{\textup{max}}$ is the largest energy gap of the machine.
    \item In both coherent and incoherent control scenarios, the vectorized form of eigenvalues of the final state is majorized by that of the following state,
\begin{equation}
   \varrho^*_{\widetilde{S}}(\beta)=\sum_{n=0}^{d_{\widetilde{S}} -1}\frac{e^{-\beta n \widetilde{\mathcal{E}}_{\textup{max}}}}{\mathbbm{Z}_{\tiny{\widetilde{S}}}(\beta,\widetilde{\mathcal{E}}_{\textup{max}})} \, \ketbra{n}{n}_{\widetilde{S}}, 
\end{equation}
if the initial state $\varrho^{(0)}_{\widetilde{S}}(\beta)$ is majorized by $\varrho^*_{\widetilde{S}}(\beta)$.

\item In the coherent control paradigm, the asymptotically optimal state, which is also achievable, is given by $\varrho^*_{\widetilde{S}}(\beta)$.
\end{itemize}
\end{lem}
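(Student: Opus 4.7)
The plan is to reduce to diagonal dynamics in the energy basis and to proceed by induction on the cycle index, focusing on coherent control; the incoherent case then follows by a parallel argument in which general unitaries are replaced by energy-conserving ones augmented by an auxiliary hot bath. Since $\tau_{\widetilde{S}}(\beta)$, $\tau_{\widetilde{M}}(\beta)$, and the candidate optimal state $\varrho^*_{\widetilde{S}}(\beta)$ are all diagonal in their respective energy eigenbases, and since every permutation of joint diagonal entries can be realized by a unitary, it suffices to track probability vectors $\mathbf{p}^{(n)}$ for $\widetilde{S}$ and $\mathbf{q}$ for the fresh machine, with $q_j=e^{-\beta \mathcal{E}_j}/\mathcal{Z}_{\widetilde{M}}$. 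In the coherent paradigm the step-wise optimal unitary at cycle $n$ sorts the joint distribution $\mathbf{p}^{(n-1)}\otimes\mathbf{q}$ in non-increasing order and assigns the top $d_{\widetilde{M}}$ entries to the ground level of $\widetilde{S}$, the next $d_{\widetilde{M}}$ to the first excited level, and so on, producing the one-step map
\begin{equation}
    \varphi(\mathbf{p})_n \;:=\; \sum_{j=0}^{d_{\widetilde{M}}-1} \bigl(\mathbf{p}\otimes\mathbf{q}\bigr)^{\downarrow}_{n d_{\widetilde{M}} + j}.
\end{equation}
Optimality of this assignment follows because the joint spectrum is unitarily invariant while the marginal on $\widetilde{S}$ is obtained by grouping joint diagonal entries into $d_{\widetilde{M}}$-blocks and summing: the block-sorted arrangement maximises every partial sum and hence majorizes every other admissible marginal.

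The next step is to verify that $\mathbf{p}^*=(p_n^*)$ with $p_n^* = e^{-\beta n \widetilde{\mathcal{E}}_{\textup{max}}}/\mathbbm{Z}_{\widetilde{S}}(\beta,\widetilde{\mathcal{E}}_{\textup{max}})$ is a fixed point of $\varphi$, and then to show that $\varphi$ preserves majorization by $\mathbf{p}^*$. The key identity is $p_n^* q_{d_{\widetilde{M}}-1} = p_{n+1}^* q_0$, which holds because $p_{n+1}^*/p_n^* = e^{-\beta \widetilde{\mathcal{E}}_{\textup{max}}} = q_{d_{\widetilde{M}}-1}/q_0$. This forces the sorted spectrum of $\mathbf{p}^*\otimes\mathbf{q}$ to partition into consecutive $d_{\widetilde{M}}$-blocks indexed by $n$, each of which sums back to $p_n^*$. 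For monotonicity I would combine two standard ingredients: tensor product preserves majorization, so $\mathbf{p}\prec\mathbf{p}^*$ implies $\mathbf{p}\otimes\mathbf{q}\prec\mathbf{p}^*\otimes\mathbf{q}$; and the block-sum-after-sort operation preserves majorization because its $k$-th partial sum equals the $(k d_{\widetilde{M}})$-th partial sum of the underlying sorted joint vector, so that every majorization inequality transfers. Induction on $n$ then yields $\mathbf{p}^{(n)}\prec\mathbf{p}^*$ whenever $\mathbf{p}^{(0)}\prec\mathbf{p}^*$, proving part (ii); part (i) is the first-partial-sum specialisation. The incoherent bound follows by repeating this argument on the extended system $\widetilde{S}\widetilde{M}H$ with a hot ancilla $H$, where the energy-conserving restriction leaves $\widetilde{\mathcal{E}}_{\textup{max}}$ as the operative energy scale.

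For attainability in the coherent paradigm (part iii), I would argue convergence of the iterates to $\mathbf{p}^*$. The sequence $\{\mathbf{p}^{(n)}\}$ is monotonically non-decreasing in the majorization order and bounded above by $\mathbf{p}^*$, so compactness of the probability simplex produces a convergent subsequence whose limit is a fixed point of $\varphi$ lying in the majorization downset of $\mathbf{p}^*$. The main obstacle I anticipate is ruling out spurious fixed points: one must track the equality cases in the block-sum inequalities and use that $\mathbf{q}$ has strictly positive entries so that no mass can remain trapped in a block without aligning with the geometric profile of $\mathbf{p}^*$. A clean technical device is to exhibit a strictly Schur-concave functional (for instance a Renyi entropy $S_\alpha$ with $\alpha>1$) that is strictly decreased by $\varphi$ off the fixed-point locus, thereby forcing convergence of the entire sequence, not merely of a subsequence, to $\mathbf{p}^*$ and completing the proof.
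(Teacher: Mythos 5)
This lemma is not proved in the paper at all: it is imported verbatim as Theorem~1 of Ref.~\cite{Clivaz_2019L}, so there is no in-paper argument to compare against line by line. The closest in-paper material is the intuitive sketch in Sec.~IV and the machinery of Appendices~B and~D, and your reconstruction uses exactly those ingredients: the reduction to sorted probability vectors, the observation that the optimal marginal after a joint unitary is the $d_{\widetilde{M}}$-block sum of the sorted joint spectrum (this is precisely the paper's Lemma~2 on reduced-state majorization), the fixed-point identity $p^*_n q_{d_{\widetilde{M}}-1}=p^*_{n+1}q_0$, and stability of majorization under tensoring~\cite{Bondar2003}. Your coherent-scenario argument is sound: $\varphi(\mathbf{p})\succ\mathbf{p}$ makes the partial sums monotone and bounded, and the ``spurious fixed point'' worry you raise resolves cleanly without the Schur-concave functional --- for any fixed point one gets $p^\downarrow_k\le p^\downarrow_{k-1}e^{-\beta\widetilde{\mathcal{E}}_{\textup{max}}}$ from the equality case of the block-sum bound, hence $1\le p^\downarrow_0\,\mathbbm{Z}_{\widetilde{S}}$, which together with $\mathbf{p}\prec\mathbf{p}^*$ forces $\mathbf{p}=\mathbf{p}^*$.

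The one genuine soft spot is the incoherent scenario, which you dispatch in a single sentence. It does not follow by ``repeating the argument'' on $\widetilde{S}\widetilde{M}H$: adjoining a hot ancilla changes both the initial machine spectrum $\mathbf{q}$ (part of it is no longer at $\beta$) and the available energy gaps, so the block structure and the identity $p^*_nq_{d_{\widetilde{M}}-1}=p^*_{n+1}q_0$ that pins down $\widetilde{\mathcal{E}}_{\textup{max}}$ as the operative scale must be re-derived under the energy-conservation constraint and in the limit $\beta_H\to 0$; this is the genuinely delicate part of the cited theorem and is where most of the work in Ref.~\cite{Clivaz_2019E} lies. Since the present paper only invokes the coherent part of the lemma in its own proofs, this gap is peripheral here, but as a proof of the lemma as stated it is incomplete on that bullet.
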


In view of the fact that the final state $\varrho_{\widetilde{S}}^*(\beta)$ has a unique eigenvalue distribution and is achievable in the infinite-cycle limit, it is possible to investigate this bound on the population of a particular subspace of dimension $d$, rather than just the ground state population. It is straightforward to show that its population is upper bounded by the $d$ largest eigenvalues
\begin{equation}
   \widetilde{p}^*_{\tiny{\mathcal{H}_d}}(\beta)\leq \frac{\sum_{n=0}^{d -1}e^{-\beta n \widetilde{\mathcal{E}}_{\textup{max}}}}{\mathbbm{Z}_{\tiny{\widetilde{S}}}(\beta,\widetilde{\mathcal{E}}_{\textup{max}})}.
   \label{eq:targetsubspace}
\end{equation}

With this knowledge, we are in a suitable position to study the optimal cooling of the non-Markovian collision model protocol in the limit of infinite cycles by employing Lemma~\ref{lem:Markov}. In our case, an arbitrary target system $S$ interacts with $k$ machines at each step and $\ell$ of these carries memory forward to be involved in the next interaction. Thus, the joint system $SL$ corresponds to the target system $\widetilde{S}$ here, which undergoes Markovian dynamics with respect to the $k-\ell$ fresh machines added at each step, which comprise $\widetilde{M}$; hence, $\widetilde{\mathcal{E}}_{\textup{max}}$ is equal to $(k-\ell)\mathcal{E}_{\textup{max}}$. It turns out that the maximum energy gap of the fresh machines and the total dimension and number of the memory carriers play an important role in the ultimate cooling bound.

Using our Markovian embedding of the dynamics and Lemma~\ref{lem:Markov}, we see that in the limit of infinite cycles for any control paradigm, the vector of the eigenvalues of the asymptotic state is majorized by
\begin{equation}\label{eq:appcoolSL}
       \varrho^*_{SL}(\mathcal{E}_\textup{max},\beta,k,\ell)=\sum_{n=0}^{d_{S}d_M^\ell -1}\frac{e^{-\beta n (k-\ell){\mathcal{E}}_{\textup{max}}}}{\tiny{\mathbbm{Z}_{\tiny{SL}}(\beta, (k-\ell)\mathcal{E}_\textup{max})}}\, \ket{n}\bra{n}_{SL} 
\end{equation}
if $\varrho_{SL}^{(0)}(\mathcal{E}_\textup{max},\beta,k,\ell) \prec \varrho_{SL}^{*}(\mathcal{E}_\textup{max},\beta,k,\ell)$ and $\{\ket{n}_{SL}\}$ is the energy eigenbasis with respect to which the energy eigenvalues are sorted in non-decreasing order. So far, we have found the achievable passive state that majorizes all other reachable states of $SL$ via unitary operations on $SLR$. However, this state is not unique as the characterization is based solely on its eigenstate distribution: one can indeed find a whole set of equally cool reachable states, i.e., those for which $\vec{\lambda} [\varrho_{SL}^{*}(\mathcal{E}_\textup{max},\beta,k,\ell)]=\vec{\lambda} [U_{SL}\varrho_{SL}^{*}(\mathcal{E}_\textup{max},\beta,k,\ell)U_{SL}^\dagger]~\forall \; U_{SL}$, where $\vec{\lambda} [\varrho]$ indicates the vectorized form of the eigenvalues of $\varrho$. We now present another lemma which says that from any such state of $SL$, one can reach the optimally cool state of $S$, $\varrho_{S}^{*}$, helping us complete the proof.

\begin{lem}[Reduced state majorization]
\label{lem:maj subsystem}
For any pair states $\varrho_{AB}$ and $\sigma_{AB}$, if $\sigma_{AB}\prec \varrho_{AB}$, there exists a unitary  $U_{AB}^{\textup{opt}}$ on $\mathcal{H}_{AB}$ such that:
\begin{equation}
    \mathrm{tr}_{B}{\left[U_{AB}\sigma_{AB}U_{AB}^{\dagger}\right]}\, \prec \, \mathrm{tr}_{B}{\left[U_{AB}^\textup{opt}\varrho_{AB}{U_{AB}^{\textup{opt} \; \dagger}}\right]}~~~~~~~~~~\forall \; U_{AB}.
\end{equation}
\end{lem}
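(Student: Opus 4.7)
The plan is to bound the eigenvalues of any reduced state achievable from $\sigma_{AB}$ via Ky Fan's maximum principle, then construct an explicit unitary $U_{AB}^{\textup{opt}}$ acting on $\varrho_{AB}$ whose reduced state saturates the analogous bound; chaining these two facts with the hypothesis $\sigma_{AB}\prec\varrho_{AB}$ then yields the claimed majorization.

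First I would denote the spectra of $\varrho_{AB}$ and $\sigma_{AB}$ by $\vec{\lambda}^\downarrow$ and $\vec{\mu}^\downarrow$ (sorted in non-increasing order), and let $\vec{q}^\downarrow$ be the spectrum of $\mathrm{tr}_B[U_{AB}\sigma_{AB}U_{AB}^\dagger]$ for an arbitrary unitary $U_{AB}$. For any rank-$k$ orthogonal projector $P_A$ on $\mathcal{H}_A$, the operator $P_A\otimes\mathbbm{1}_B$ is a rank-$(kd_B)$ projector on $\mathcal{H}_{AB}$, so Ky Fan's maximum principle applied to the Hermitian operator $U_{AB}\sigma_{AB}U_{AB}^\dagger$ (whose spectrum is $\vec{\mu}^\downarrow$) gives
\begin{align*}
\mathrm{tr}\bigl[P_A\cdot\mathrm{tr}_B[U_{AB}\sigma_{AB}U_{AB}^\dagger]\bigr]
&=\mathrm{tr}\bigl[(P_A\otimes\mathbbm{1}_B)\,U_{AB}\sigma_{AB}U_{AB}^\dagger\bigr]\\
&\leq\sum_{i=0}^{kd_B-1}\mu_i^\downarrow.
\end{align*}
Choosing $P_A$ to project onto the eigenvectors of the reduced state corresponding to its $k$ largest eigenvalues yields the universal bound $\sum_{j=0}^{k-1}q_j^\downarrow\leq\sum_{i=0}^{kd_B-1}\mu_i^\downarrow$ for every $k\in\{1,\dots,d_A\}$.

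Next I would exhibit an achievability construction for $\varrho_{AB}$. Fix bases $\{|j\rangle_A\}$ of $\mathcal{H}_A$ and $\{|m\rangle_B\}$ of $\mathcal{H}_B$, and define $U_{AB}^{\textup{opt}}$ to be any unitary diagonalizing $\varrho_{AB}$ in the product basis $\{|j,m\rangle:=|j\rangle_A\otimes|m\rangle_B\}$ with the $(jd_B+m)$-th largest eigenvalue placed at position $(j,m)$:
\[U_{AB}^{\textup{opt}}\varrho_{AB}U_{AB}^{\textup{opt}\,\dagger}=\sum_{j=0}^{d_A-1}\sum_{m=0}^{d_B-1}\lambda^\downarrow_{jd_B+m}\,|j,m\rangle\langle j,m|.\]
Tracing out $B$ produces a diagonal state on $A$ with entries $p_j^*=\sum_{m=0}^{d_B-1}\lambda^\downarrow_{jd_B+m}$, which are non-increasing in $j$ (each being the sum of $d_B$ consecutive elements of a non-increasing sequence) and saturate the Ky Fan bound, since $\sum_{j=0}^{k-1}p_j^*=\sum_{i=0}^{kd_B-1}\lambda_i^\downarrow$ for every $k$.

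Combining both ingredients with the assumption $\vec{\mu}^\downarrow\prec\vec{\lambda}^\downarrow$ yields
\[\sum_{j=0}^{k-1}q_j^\downarrow\leq\sum_{i=0}^{kd_B-1}\mu_i^\downarrow\leq\sum_{i=0}^{kd_B-1}\lambda_i^\downarrow=\sum_{j=0}^{k-1}p_j^*\]
for all $k\in\{1,\dots,d_A\}$, which is exactly the asserted majorization. The main subtlety is verifying that the block-diagonal placement of eigenvalues genuinely saturates the Ky Fan inequality; this reduces to the combinatorial observation that the product-basis states $\{|j,m\rangle:j<k,\,m<d_B\}$ span precisely the range of $P_A\otimes\mathbbm{1}_B$ for $P_A=\sum_{j=0}^{k-1}|j\rangle\langle j|_A$, so the partial sum is tight.
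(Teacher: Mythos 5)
Your proof is correct, and it reaches the same optimal unitary as the paper does: the block-sorting permutation that places the $(jd_B+m)$-th largest eigenvalue of $\varrho_{AB}$ at $\ket{j,m}$, so that the reduced state has entries $p_j^*=\sum_m\lambda^\downarrow_{jd_B+m}$ with partial sums $\sum_{i<kd_B}\lambda_i^\downarrow$. Where you genuinely diverge is in how the upper bound on the \emph{entire} unitary orbit of $\sigma_{AB}$ is established. The paper restricts attention to states diagonal in a fixed product basis and to permutation unitaries, asserting without detailed argument that the optimum over all unitaries is attained there (``One can show that it is possible to obtain $\sigma_A^{\textup{opt}}$ from a bipartite state that is diagonal in the same basis''), and then compares the two block-sorted optima via the majorization hypothesis on the full spectra. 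You instead invoke Ky Fan's maximum principle with the rank-$kd_B$ projector $P_A\otimes\mathbbm{1}_B$ to prove directly that $\sum_{j<k}q_j^\downarrow\le\sum_{i<kd_B}\mu_i^\downarrow$ for \emph{every} unitary $U_{AB}$, which closes the one step the paper leaves implicit, and it lets you skip constructing the optimal unitary for $\sigma_{AB}$ altogether---you only need the achievability construction on the $\varrho_{AB}$ side before chaining with $\vec{\mu}^\downarrow\prec\vec{\lambda}^\downarrow$. The paper's route is more constructive (it exhibits both optimal reduced states explicitly, which is then reused in the step-wise protocol), while yours is tighter as a self-contained proof of the lemma.
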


\begin{proof}

Without loss of generality, we assume that the eigenvalues of both states $\varrho_{AB}$ and $\sigma_{AB}$ are sorted in non-increasing order as follows
\begin{align}
    \textbf{p}_{AB}=\{p_{\alpha}^\downarrow\}_{\alpha=0}^{d_Ad_B-1}\quad\quad \textup{and} \quad \quad
    \textbf{q}_{AB}=\{q_{\alpha}^\downarrow\}_{\alpha=0}^{d_Ad_B-1}.
\end{align}
Based on the sorted eigenvalues, $\sigma_{AB}\prec \varrho_{AB}$ if and only if
\begin{equation}
\label{eq:majorization inequality}
\sum_{\alpha=0}^{k} q_{\alpha}^\downarrow\,\leq\, \sum_{\alpha=0}^{k} p_{\alpha}^{\downarrow} ~~~~~\forall \; k\in \{0,1,\hdots,d_Ad_B-1\}.
\end{equation}
Now we aim to find the reduced state $\sigma_A^{\textup{opt}}$ majorizing all of the achievable reduced states possible to generate by a unitary transformation of $\sigma_{AB}$, which we assume to be diagonal in the orthonormal basis $\{\ket{i j}_{AB}\}$ without loss of generality:
\begin{equation}
    \sigma_{AB}=\sum_{i=0}^{d_A-1}\sum_{j=0}^{d_B-1} q_{ij}\ket{ij}\bra{ij}_{AB}
\end{equation}
One can show that it is possible to obtain $\sigma_A^{\textup{opt}}$ from a bipartite state that is diagonal in the same basis. Then we have,
\begin{equation}
   \widetilde{\sigma}_{AB}= U_{AB}^{\textup{opt}} \sigma_{AB}{U_{AB}^{ \textup{opt} \; \dagger}}=\sum_{i=0}^{d_A-1}\sum_{j=0}^{d_B-1} \widetilde{q}_{ij}\ket{ij}\bra{ij}_{AB},
\end{equation}
where $U_{AB}^{\textup{opt}}$ is simply a permutation matrix that reorders the eigenvalues appropriately. The final reduced state is then given by
\begin{equation}
    \widetilde{ \sigma}_{A}=\sum_{i=0}^{d_A-1}\big(\sum_{j=0}^{d_B-1} \widetilde{q}_{ij}\big)\ket{i}\bra{i}_A. 
\end{equation}

We now need to show that the appropriate unitary maximizes the eigenvalues of the reduced state with respect to eigenvalues of $\sigma_{AB}$. If we rearrange the eigenvalues in such a way that $\widetilde{q}_{ij}=q_\alpha^{\downarrow}$ where $\alpha$ is given by $\alpha:= i\, d_B+j $, we obtain $\sigma_A^{\textup{opt}}$ as the following
\begin{equation}
    \sigma_A^{\textup{opt}}=\sum_{i=0}^{d_A-1}{\eta}^{\downarrow}_i\ket{i}\bra{i}_A=\sum_{i=0}^{d_A-1}\big(\sum_{j=0}^{d_B-1} q^{\downarrow}_{\alpha=id_B+j}\big)\ket{i}\bra{i}_A,
\end{equation}
where, due to the sorting of $\{q_\alpha^{\downarrow}\}$, the eigenvalues of $\sigma_A^{\textup{opt}}$ are sorted in non-decreasing order. The final reduced state satisfies the following condition 
\begin{equation}
\ptr{B}{U_{AB}\sigma_{AB}U_{AB}^{\dagger}}\, \prec \,\ptr{B}{U_{AB}^\textup{opt}\sigma_{AB}{U_{AB}^{\textup{opt} \; \dagger}}}=    \sigma_A^{\textup{opt}}~~~~~~~~~~\forall \; U_{AB}.
\end{equation}

Similarly one can find $\varrho_A^{\textup{opt}}$ by applying a unitary $V_{AB}^{\textup{opt}}$,
\begin{equation}
    \varrho_A^{\textup{opt}}=\sum_{i=0}^{d_A-1}\xi^{\downarrow}_i \ket{i}\bra{i}_A=\sum_{i=0}^{d_A-1}\big(\sum_{j=0}^{d_B-1} {p}^{\downarrow}_{\alpha=id_B+j}\big)\ket{i}\bra{i}_A,
\end{equation}
whose eigenvalues are also in non-increasing order by construction. The final step of the proof is to show that $\sigma_A^{\textup{opt}}\prec \varrho_A^{\textup{opt}}$ whenever $\sigma_{AB}\prec \varrho_{AB}$. This majorization condition can be recast in the form of
\begin{equation}
\sum_{i=0}^{k} \eta^{\downarrow}_i\,\leq\, \sum_{i=0}^{k} \xi^{\downarrow}_i \Rightarrow \sum_{\alpha=0}^{(k+1)d_B-1} q^{\downarrow}_{\alpha}\,\leq\, \sum_{\alpha=0}^{(k+1)d_B-1} p^{\downarrow}_{\alpha}
~~~~~\forall \; k\in \{0,1,\hdots,d_A-1\}.
\label{eq:star majorization}
\end{equation}
Using inequality~\eqref{eq:majorization inequality}, one can easily show that inequality~\eqref{eq:star majorization} always holds, i.e., $\sigma_A^{\textup{opt}}\prec \varrho_A^{\textup{opt}}$, completing the proof.
\end{proof}

In the next step, we aim to maximize the population of the system ground state, i.e., the maximum population of the specific subspace of the $SL$ target given in Eq.~\eqref{eq:targetsubspace}. One must therefore find the target state $\varrho^*_S(\beta,k,\ell)$ that can be achieved from the states $\varrho_{SL}$ with the same eigenvalues as $  \varrho^*_{SL}(\mathcal{E}_\textup{max},\beta,k,\ell)$, since, from Lemma~\ref{lem:maj subsystem} we know that this state majorizes the largest set of states in $S$. In order to do so, we maximize the eigenvalues of $\varrho^*_S(\mathcal{E}_\textup{max},\beta,k,\ell)$ with respect to those of  $\varrho^*_{SL}(\mathcal{E}_\textup{max},\beta,k,\ell)$. One can appropriately sort the eigenvalues of the system $S$ and the memory carrier machines $L$ with the following unitary:
\begin{align}\label{eq:appmaj}
      U_{SL}^{\tiny{\textup{opt}}} \varrho^*_{SL}(\mathcal{E}_\textup{max},\beta,k,\ell) {U_{SL}^{\tiny{\textup{opt}}}}^\dagger &=\sum_{n=0}^{d_{S} -1}\sum_{j=0}^{d_M^\ell -1}\frac{e^{-\beta (n d_M^\ell+j)(k-\ell) {\mathcal{E}}_{\textup{max}}}}{\tiny{\mathbbm{Z}_{\tiny{SL}}(\beta, (k-\ell)\mathcal{E}_\textup{max})}} \, \ket{nj}\bra{nj}_{SL}\nonumber\\
      &=\sum_{n=0}^{d_{S} -1}\sum_{j=0}^{d_M^\ell -1}\frac{e^{-\beta (n d_M^\ell+j)(k-\ell) {\mathcal{E}}_{\textup{max}}}}{\sum_{n'=0}^{d_{S} -1}\sum_{j'=0}^{d_M^\ell -1}e^{-\beta (n' d_M^\ell+j')(k-\ell) {\mathcal{E}}_{\textup{max}}}} \, \ketbra{nj}{nj}_{SL}\nonumber\\
      &=\sum_{n=0}^{d_{S} -1}\frac{e^{-\beta n d_M^\ell(k-\ell) {\mathcal{E}}_{\textup{max}}}}{\sum_{n'=0}^{d_{S} -1}e^{-\beta n' d_M^\ell(k-\ell) {\mathcal{E}}_{\textup{max}}}} \, \ket{n}\bra{n}_{S}\otimes
      \sum_{j=0}^{d_M^\ell -1}\frac{e^{-\beta j(k-\ell) {\mathcal{E}}_{\textup{max}}}}{\sum_{j'=0}^{d_M^\ell -1}e^{-\beta j'(k-\ell) {\mathcal{E}}_{\textup{max}}}} \, \ket{j}\bra{j}_{L}, 
\end{align}

Thus, beginning with the optimally cool $SL$ state in Eq.~\eqref{eq:appcoolSL}, we can reorder the eigenvalues via $U_{SL}^{\tiny{ \textup{opt}}}$ in Eq.~\eqref{eq:appmaj} such that the subsystem $S$ is optimally cool; finally applying Lemma~\ref{lem:maj subsystem} then implies that 
\begin{equation}
    \mathrm{tr}_L[U_{SL}\,   \varrho^*_{SL}(\mathcal{E}_\textup{max},\beta,k,\ell) \, U_{SL}^\dagger ]\prec   \varrho^*_S(\mathcal{E}_\textup{max},\beta,k,\ell)~~~~~~~~\forall \; U_{SL},
\end{equation}
where $\varrho^*_{S}(\mathcal{E}_\textup{max},\beta,k,\ell)$ is indeed given by taking the partial trace over $L$ of $\varrho_{SL}^*(\mathcal{E}_\textup{max},\beta,k,\ell)$:
\begin{align}
       \varrho^*_S(\mathcal{E}_\textup{max},\beta,k,\ell)&=\sum_{n=0}^{d_{S} -1}\frac{\sum_{j=0}^{d_M^\ell -1}e^{-\beta (n d_M^\ell+j)(k-\ell) {\mathcal{E}}_{\textup{max}}}}{\sum_{n'=0}^{d_{S} -1}\sum_{j'=0}^{d_M^\ell -1}e^{-\beta (n' d_M^\ell+j')(k-\ell) {\mathcal{E}}_{\textup{max}}}} \, \ketbra{n}{n}_{S} \nonumber\\
       &=\sum_{n=0}^{d_{S} -1}\frac{e^{-\beta n d_M^\ell(k-\ell) {\mathcal{E}}_{\textup{max}}}}{\sum_{n'=0}^{d_{S} -1}e^{-\beta n' d_M^\ell(k-\ell) {\mathcal{E}}_{\textup{max}}}} \, \ketbra{n}{n}_{S},
\end{align}
thus establishing $\varrho^*_{S}(\mathcal{E}_\textup{max},\beta,k,\ell)$ as the optimal system state in the asymptotic limit. 

In conclusion, from Lemma~\ref{lem:Markov}, we know that the final state of the system $SL$, for any control paradigm in the infinite-cycle limit, is majorized by  $\varrho^*_{{SL}}(\mathcal{E}_\textup{max},\beta,k,\ell)$. Consequently, via Lemma~\ref{lem:maj subsystem}, the final state of $S$ is also majorized by $ \varrho^*_{{S}}(\mathcal{E}_\textup{max},\beta,k,\ell)$. Then, the population of the ground state of the system is upper bounded by the sum of the $d_M^\ell$ largest eigenvalues of $\varrho^*_{SL}(\mathcal{E}_\textup{max},\beta,k,\ell)$, i.e.,  ${p}^*(\mathcal{E}_\textup{max},\beta,k,\ell)= (\sum_{n=0}^{d_{{S}} -1}e^{-\beta n d_M^\ell(k-\ell){\mathcal{E}}_{\textup{max}}})^{-1}$. 

We finally prove that in the coherent scenario, the state  $\varrho^*_{{S}}(\mathcal{E}_\textup{max},\beta,k,\ell)$ is achievable in the limit of infinite cycles. Using Lemma~\ref{lem:Markov}, one can easily show that the final state of $SL$ under optimal coherent operations converges to  $\varrho^*_{{SL}}(\mathcal{E}_\textup{max},\beta,k,\ell)$. To do so, we use the fact that in the coherent scenario, one can apply any unitary operation on the system $SL$ at the final step. We then achieve the desired target state $\varrho^*_{{S}}(\mathcal{E}_\textup{max},\beta,k,\ell)$ via employing the unitary $U_{SL}^{\textup{opt}}$, completing the proof. \qed

\section{Role of system-memory carrier correlations}\label{app:correlations}

We here remark on the correlations that can develop between the target system $S$ and memory carriers $L$ throughout the cooling protocols that have been discussed in the main text.

In particular, we have focused on two procedures. The first strategy optimally cools the joint $SL$ system at each timestep, which does not necessarily ensure that $S$ is locally optimally cool; it is only at the final step that the target system is further cooled by transferring entropy away from it and toward the memory carriers. More precisely, this protocol implements the unitary whose action is defined in Eq.~\eqref{eq:sortsl} at each step (which globally cools $SL$), and only finally implements the unitary that ensures $S$ to be locally cool, whose action is defined in Eq.~\eqref{eq:sorts}. This protocol thus focuses in each step on cooling SL \emph{globally}: in effect, it cools SL with respect to its own (global) energy eigenbasis $(\ket{0}_{SL}, \ket{1}_{SL}, \dots, \ket{n}_{SL})$, with $\ket{i}_{SL}$ denoting the $i^{\text{th}}$ excited state of $SL$ and $n=d_S d_M-1$; as such, we refer to it here as the ``global basis cooling protocol''. 

While the above protocol eventually, i.e., at the last step, optimally cools $S$, it does not necessarily do so at each step. To this end, in the main text we presented a second cooling protocol which is step-wise optimal. Intuitively, this protocol globally cools $SL$ optimally at each step (as does the strategy described above), and, given that optimally cool state, additionally performs a unitary on $SL$ to furthermore optimally cool $S$ \emph{locally} at each step; as such, we refer to this scheme as the ``local basis cooling protocol''. In practical terms, one can view this protocol as cooling $SL$ optimally at each step in the locally-ordered energy eigenbasis $(\ket{00}_{SL}, \ket{01}_{SL}, \dots, \ket{0, d_L-1}_{SL}, \ket{10}_{SL}, \dots, \ket{d_S-1, d_L-1}_{SL})$, where $\ket{i}_S$ is the $i^{\text{th}}$ energy excited state of $S$ and $\ket{j}_L$ the $j^{\text{th}}$ energy excited state of $L$. 

Note that, while related by a permutation of the basis elements, this local energy eigenbasis generally differs from the global energy eigenbasis of $SL$, which does not take local information regarding the energy structure into account. For instance, if the target and memory comprise a qubit each with (respectively) distinct energy gaps, $\{ E_S, E_L\}$ such that $E_L > E_S$ without loss of generality, then cooling with respect to the global basis would order the eigenvalues $\lambda_0 \geq \dots \geq \lambda_3$ into the respective subspaces $\{\ket{00}, \ket{10}, \ket{01}, \ket{11}\}$. However, for $S$ to be optimally cool, the eigenvalues need to be sorted in non-increasing order with respect to $\{\ket{00}, \ket{01}, \ket{10}, \ket{11}\}$; this type of ordering is only achieved at the last timestep via the final unitary in the global cooling protocol, but at every timestep in the local cooling protocol. Discrepancies between the locally-optimal and globally-optimal basis ordering typically become more pronounced as systems become more complex, i.e., multi-partite and high-dimensional, highlighting the necessity for careful accounting. In general, the logic above implies that the local cooling protocol will not reach the coldest $SL$ possible state in particular at each step, but nonetheless always reaches one that is unitarily equivalent to it.

We now analyze the evolution of correlations in $SL$, as measured by the mutual information $I(S:L) := H(\varrho_S) + H(\varrho_L) - H(\varrho_{SL})$, where $H(X) := - \tr{ X \log{(X)}}$ is the von Neumann entropy. In both protocols, the joint state begins as a tensor product and therefore has no correlations. Moreover, the asymptotic state of both protocols is also correlation-free, as was shown in Appendix~\ref{app:proof}. It is of particular interest to note that the asymptotic state of the global protocol always has a product state in its unitary orbit that has the coldest possible state that $S$ can be brought to as its marginal.

Nonetheless, although both protocols start and end with states that are completely decorrelated, correlations do build up for both protocols at finite steps, before decreasing asymptotically as shown in Figure~\ref{fig:correlations}. The finite-time behavior of the correlations generally depends on the full spectrum of $SL$ and the number of steps performed (as does the cooling behavior). In particular---in contrast to the coolness of $S$---the behavior of correlations is non-monotonic, and one cannot even establish a hierarchy between the amount of correlations at any finite time of either protocol. Having presented these initial insights, we leave the full analysis of the role of correlations in quantum cooling as an interesting open avenue for pursuit.

\begin{figure}[t!]
\centering
\includegraphics[width=0.6\linewidth]{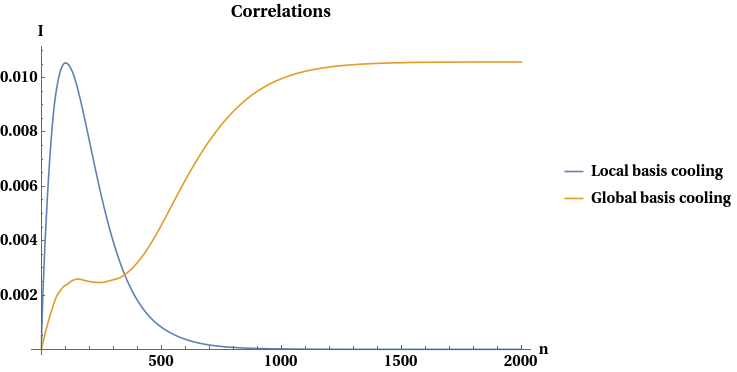}
\caption{ \label{fig:correlations}\textit{Finite-time correlations for global and local cooling protocols.} Here we compare the correlations generated in the global and local cooling protocols, as described in the text. The orange line depicts the protocol that cools $SL$ globally, disregarding the local energy structure, whereas the blue line corresponds to the procedure that ensures $S$ itself is locally optimally cool. Note that, strictly speaking, the global basis cooling protocol includes a final unitary to be implemented at the last timestep, which yields the optimally cool $S$ state; this is not shown above (in order to highlight the distinction between the behavior of correlations for both protocols) but it implies that at any finite time, the joint states achieved by either protocol are unitarily equivalent. In particular, for enough timesteps, the state achieved by the global cooling protocol is arbitrarily close to being unitarily equivalent to a correlation-free product state, which has the coolest possible state of $S$ as its marginal (i.e., the asymptotic state achieved by the local basis cooling protocol). The values used for this simulation are: $d_S=2, d_M=3, k=5, \ell=3, E_1= 1, \mathcal{E}_1=0.5, \mathcal{E}_2=1.2$ and $\beta=0.2$, in natural units where $k_B$ is set to $1$. }
\end{figure}

\section{Step-wise optimal protocol and finite time comparisons}\label{app:finitetime}

Here we provide some analysis on the finite time behavior for the cooling strategies discussed throughout the main text. It is important to note that the finite time properties in general depend upon the details of the full complex energy spectrum of the machines; nonetheless, we have the following observations.

We first detail the step-wise optimal protocol, briefly described in the main text and prove its optimality. 

\begin{defn}[Step-wise optimal cooling unitary]
Given a joint state $\varrho_{SLR}$, let $V^\textup{opt}_{SLR}$ be the unitary that reorders the eigenvalues of $\varrho_{SLR}$ \emph{within each block partitioned by $R$} such that the largest is in the subspace $\ket{000}\bra{000}_{SLR}$, second largest in $\ket{001}\bra{001}_{SLR}$, third largest in $\ket{002}\bra{002}_{SLR}$, $\dots, (d_M^{k-\ell})^{\text{th}}$ largest in $\ket{010}\bra{010}_{SLR}$, and so on until the smallest eigenvalue is in $\ket{d_S-1,d_M^\ell-1,d_M^{k-\ell}-1}$ $\bra{d_S-1,d_M^\ell-1,d_M^{k-\ell}-1}_{SLR}$, i.e., perform
\begin{gather}
\label{eq:opt rhoSLR}
    U^\textup{opt}_{SLR} \varrho_{SLR} {U^{\textup{opt}\; \dagger}_{SLR}} = \sum_{\mu=0}^{d_S-1} \sum_{\nu=0}^{d_M^\ell-1} \sum_{\omega=0}^{d_M^{k-\ell}-1} \lambda^\downarrow_{\mu \cdot d_M^{k}+\nu \cdot d_M^{k-\ell} + \omega } \ket{\mu \nu \omega}\bra{\mu \nu \omega}_{SLR},
\end{gather}
where $\lambda^\downarrow$ denotes the vector of eigenvalues of $\varrho_{SLR}$ labeled in non-increasing order.
\end{defn}

\begin{thm}[Step-wise optimal cooling protocol]
By applying the unitary defined in Eq.~\eqref{eq:opt rhoSLR} at each step, the cooling protocol is step-wise optimal.
\end{thm}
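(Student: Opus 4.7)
The plan is to prove the theorem by induction on the step index $j$, with an inductive hypothesis strengthened beyond step-wise optimality of the target $S$. Specifically, I will establish that after step $j$, the eigenvalue spectrum of the joint state $\varrho_{SL}^{(j),\mathcal{P}}$ produced by the protocol $\mathcal{P}$ (repeated application of $U^{\textup{opt}}_{SLR}$) majorizes that of $\varrho_{SL}^{(j),\mathcal{Q}}$ for any competing protocol $\mathcal{Q}$. Strengthening the hypothesis to $SL$ (rather than just $S$) is essential: the optimality at step $j{+}1$ depends on the full spectrum of $\varrho_{SL}^{(j)}$ fed into the next collision, not only on its $S$-marginal.

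I will first verify the single-step claim. Given any input $\varrho_{SLR}^{(j-1)}$ of fixed eigenvalue spectrum and any unitary $U_{SLR}$, Lemma~\ref{lem:maj subsystem} applied at two levels (to the bipartitions $S|LR$ and $SL|R$) shows that the reduced states $\ptr{LR}{U_{SLR}\,\varrho_{SLR}^{(j-1)}\,U_{SLR}^\dagger}$ and $\ptr{R}{U_{SLR}\,\varrho_{SLR}^{(j-1)}\,U_{SLR}^\dagger}$ are each majorized by the optimum achieved by an appropriate permutation of eigenvalues. The crucial observation is that the single ordering prescribed in Eq.~\eqref{eq:opt rhoSLR}---sorting eigenvalues non-increasingly along the composite index $\mu\cdot d_M^k + \nu \cdot d_M^{k-\ell}+\omega$---realizes both optima simultaneously: summing over $\omega$ partitions the sorted sequence into contiguous blocks of length $d_M^{k-\ell}$ whose sums are non-increasing in the locally-ordered $SL$-index $(\mu,\nu)$, and a further summation over $\nu$ regroups these into coarser blocks of length $d_M^k$ whose sums are non-increasing in $\mu$. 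Both marginal spectra are therefore automatically non-increasing in the local energy basis.

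Next, I will propagate the induction. The base case $j=0$ is immediate since $\varrho_{SL}^{(0)}=\tau_S(\beta)\otimes\tau_M(\beta)^{\otimes\ell}$ is identical for every protocol. For the inductive step, since the fresh $R$-machines appended at each collision are in a protocol-independent thermal state $\tau_R$, the inductive majorization $\lambda[\varrho_{SL}^{(j-1),\mathcal{P}}]\succ\lambda[\varrho_{SL}^{(j-1),\mathcal{Q}}]$ lifts to $\lambda[\varrho_{SL}^{(j-1),\mathcal{P}}\otimes\tau_R]\succ\lambda[\varrho_{SL}^{(j-1),\mathcal{Q}}\otimes\tau_R]$ by the standard fact that majorization is preserved under tensor multiplication by a fixed state. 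Combined with the single-step claim, this yields both $\varrho_S^{(j),\mathcal{P}}\succ\varrho_S^{(j),\mathcal{Q}}$ and $\varrho_{SL}^{(j),\mathcal{P}}\succ\varrho_{SL}^{(j),\mathcal{Q}}$, closing the induction and establishing step-wise optimality.

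The main obstacle I anticipate lies in justifying the simultaneous optimization: a \emph{single} unitary must optimize both the $S$- and $SL$-marginals, and demanding local-basis optimality of $\varrho_{SL}^{(j)}$ must not cost anything relative to demanding only optimality of $\varrho_S^{(j)}$. This rests on the compatibility of the nested block structure $d_S\cdot d_L\cdot d_R$: the partial sums defining the majorization of $\varrho_S^{(j)}$ are coarse regroupings of those defining the majorization of $\varrho_{SL}^{(j)}$, so the finer non-increasing sorting over $SL$ automatically entails the coarser one over $S$. Once this compatibility is pinned down, the remaining pieces follow the same template as Lemma~\ref{lem:maj subsystem} together with routine majorization identities.
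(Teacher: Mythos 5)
Your proposal is correct and follows essentially the same route as the paper's proof: a single-step argument showing that the nested sort of Eq.~\eqref{eq:opt rhoSLR} simultaneously optimizes the $S$- and $SL$-marginals (via Lemma~\ref{lem:maj subsystem} and the compatibility of the coarse $d_M^{k}$-blocks with the finer $d_M^{k-\ell}$-blocks), followed by an induction on the timestep that propagates the strengthened $SL$-level majorization using the stability of majorization under tensoring with the fixed fresh state $\tau_R$. The paper additionally spells out the intermediate optimality of the $L$-marginal within each fixed-$\mu$ block, but this is subsumed by your nested-block compatibility observation.
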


In the Markovian case, the step-wise optimal protocol simply considers all of the eigenvalues of the joint system-machine at each timestep and optimally reorders them such that the system is as cool as possible. However, such a protocol does not ensure step-wise optimality when memory is present: here, not only must we optimally cool the system by rearranging the eigenvalues of the total accessible state at each step, but we must also ensure that this accessible state at each step is as cool as possible given its history. As the only information pertaining to the history is transmitted by the system $SL$, this means that the optimal protocol must at each step optimally cool $S$, and then subject to this constraint, optimally cool the memory carriers $L$ which go on to further cool the system at later times. 

\begin{proof}
We first need to show that $\varrho_S^\textup{opt} := \ptr{LR}{U^\textup{opt}_{SLR} \varrho_{SLR} {U^{\textup{opt}\; \dagger}_{SLR}}}$ obtained from Eq.~\eqref{eq:opt rhoSLR} majorizes the all of the reachable marginal states of $S$; this problem reduces to a constrained rearrangement of the eigenvalues of the entire system, i.e., the eigenvalues are to be arranged optimally with respect to certain eigenspaces. Since majorization theory is independent of the eigenbasis, we choose the energy eigenbasis for simplicity. 

To obtain the eigenspectrum of the system $S$ that majorizes all of the reachable states under unitary transformations on $SLR$, note that the output state of the entire system can be written in the form of 
\begin{gather}
\label{eq:opt rhoSLR tilde}
    \widetilde{U}^\textup{opt}_{SLR} \varrho_{SLR} {\widetilde{U}^{\textup{opt} \; \dagger}_{SLR}} = \sum_{\mu=0}^{d_{SL}-1} \sum_{\eta=0}^{d_M^k-1}  \lambda^\downarrow_{\mu \cdot d_M^{k}+\eta } \ket{\mu \eta}\bra{\mu \eta }_{SLR},
\end{gather}
where $\ket{\mu\, \eta}_{SLR}=\ket{\mu}_{S}\otimes \ket{ \eta}_{LR}$ and $d_{SL} = d_S d_M^\ell$. By the ordering of the eigenvalues that the unitary performs, it is straightforward to see that the $S$ marginal following the optimal transformation majorizes all others in the unitary orbit.

Second, we show that the state of the memory carriers after applying the optimal unitary, i.e., $\varrho_L^\textup{opt} := \ptr{SR}{U^\textup{opt}_{SLR} \varrho_{SLR} {U^{\textup{opt}\; \dagger}_{SLR}}}$, also majorizes all of the reachable states of $L$ given the mentioned majorization condition. We must therefore rearrange the eigenvalues of $\varrho_{SLR}$ within each block corresponding to a fixed $\mu$, i.e., sort $\{\lambda^\downarrow_{\mu \cdot d_M^{k}+\eta }\}_{\eta=0}^{d_M^{k}-1}$ in such a way that the $\nu^{\text{th}}$ largest $d_M^{k-\ell}$ eigenvalues are placed in the $\nu^{\text{th}}$ eigenspace of the system $L$, which gives the state $\varrho_L^\textup{opt}$ that majorizes all of those reachable via unitary transformations on $SLR$. To do so, we rearrange the eigenvalues of the joint $SLR$ system as $\lambda^\downarrow_{\mu \cdot d_M^{k}+\nu \cdot d_M^{k-\ell} + \omega } $, via the unitary transformation defined in Eq.~\eqref{eq:opt rhoSLR}, where $\ket{\mu \nu \omega}_{SLR}=\ket{\mu}_{S}\otimes\ket{\nu}_L\otimes \ket{\omega}_R$. It is clear that the reduced state satisfies the required majorization condition for $\varrho_{L}$, i.e., for all $\mu$, we have
\begin{equation}
   \lambda^\downarrow_{\mu \cdot d_M^{k}+\nu \cdot d_M^{k-\ell} + \omega } \geq  \lambda^\downarrow_{\mu \cdot d_M^{k}+\nu' \cdot d_M^{k-\ell} + \omega' }  ~~~~~~~\text{if} ~~~\nu>\nu'~~~~~~~\forall \; \omega, \omega' \in\{0,1,\hdots,d_M^{k-\ell}-1\},
\end{equation}
where this inequality holds due to the eigenvalue ordering of joint state of $SLR$.

Finally, we show that the output state of the system $SL$ from Eq.~\eqref{eq:opt rhoSLR} majorizes all of those reachable states of $SL$. To do so, must show that $\nu^{\text{th}}$ largest $d_M^{k-\ell}$ eigenvalues of $SLR$ only contribute to the $\nu^{\text{th}}$ eigenvalue of $SL$. This statement follows from
\begin{equation}\label{eq:slmajoriationapp}
   \lambda^\downarrow_{\mu \cdot d_M^{k}+\nu \cdot d_M^{k-\ell} + \omega } \geq  \lambda^\downarrow_{\mu' \cdot d_M^{k}+\nu' \cdot d_M^{k-\ell} + \omega' }  ~~~~\text{if} ~~~\mu \cdot d_M^{k}+\nu \cdot d_M^{k-\ell}>\mu' \cdot d_M^{k}+\nu' \cdot d_M^{k-\ell}~~~~~\forall \; \omega, \omega' \in\{0,1,\hdots,d_M^{k-\ell}-1\}.
\end{equation}

Eq.~\eqref{eq:slmajoriationapp} states that under the protocol considered, one achieves the $SL$ state that majorizes all other reachable states via unitaries on $SLR$. We now need to show that achieving this at every finite timestep $j$ is necessary for subsequent optimal cooling, i.e., that any other protocol is suboptimal. By the stability of majorization under tensor products~\cite{Bondar2003}, we know that $\varrho^{(j) \textup{opt}}_{SL} \otimes \tau_R$, where $\varrho^{(j) \textup{opt}}_{SL} := \ptr{R}{\widetilde{U}^\textup{opt}_{SLR} \sigma_{SLR}^{(j)} {\widetilde{U}^{\textup{opt} \;\dagger}_{SLR}}}$ for any global state $\sigma_{SLR}^{(j)}$, majorizes all of the states $\varrho^{(j)}_{SL} \otimes \tau_R$, where $\varrho^{(j)}_{SL}$ is generated by any other protocol and $\tau_R$ are the thermal bath machines to be added at said timestep. This majorization relation cannot be changed by performing the optimal $SLR$ unitary on $\varrho^{(j) \textup{opt}}_{SL} \otimes \tau_R$ and any other unitary on $\varrho^{(j)}_{SL} \otimes \tau_R$ as the next step of the transformation, with the former therefore yielding $\varrho_{SLR}^{(j+1) \text{opt}}$ and the latter some suboptimal $\varrho_{SLR}^{(j+1)}$. Lastly, invoking the subspace majorization result of Lemma~\ref{lem:maj subsystem}, it follows that $\varrho^{(j+1) \textup{opt}}_{SL} \succ \varrho^{(j+1)}_{SL}$.

Thus, we have shown that at each step of the protocol, we have reached the optimal $SL$ state possible given the history; it is important to note that at this level, the process is Markovian, allowing for an inductive extension of the above argumentation to hold. By further invoking Lemma~\ref{lem:maj subsystem} on the level of $SL$ at each timestep, we yield the optimally cool state of the system $S$, thereby completing the proof.
\end{proof}

\section{Relation to heat-bath algorithmic cooling and state-independent asymptotically optimal protocol}\label{app:hbac}

Here, we propose a general and robust heat-bath algorithmic cooling \textbf{(HBAC)} technique, which we show to be a special case of our generalized collision model, to optimally cool down a target system in the limit of infinite cycles. To obtain the cooling limit most rapidly, in general one must adapt the operations based on the state of $SL$ output by the dynamics at the most recent step. However, via the correspondence between the generalized collision model and HBAC, we can show that not only it is possible to cool down the system by a state-independent, fixed sequence of operations, but also that the protocol converges to the optimally cool state in the asymptotic limit. The result hence draws attention to the fact that in the limit of infinitely many repeated cycles, the dimension of the memory carriers of the protocol (not necessarily knowledge about the state at intermediate times) plays an important role and can already lead to exponential improvement over the Markovian case; in fact, perhaps surprisingly, the role of memory depth is more significant than that of the ability of the agent to implement multi-partite interactions between the system and machines at each step (although, of course, the number of memory carriers is upper bound by how multi-partite the interactions are allowed to be).

Here we will consider the effect of adding compression systems (in the terminology of the HBAC community) or a number of machines that carry memory forward (in the language of our generalized collision model) for a non-adaptive cooling protocol in which a fixed interaction between the target system and a subset of machines at each timestep is repeated infinitely many times. As we have previously, we assume that $k$ machines interact with the system at each step and $\ell$ of them carry memory forward to the next step. This means that $(k-\ell)$ fresh machines and $\ell$ memory carriers participate in the interaction at each timestep. We fix at the outset, for any given choice of these parameters, the dimension of the machines $d_M$, which (along with $k$ and $\ell$) fixes the control complexity in each of the many cases we will look at, and we also fix the temperature at which everything begins, $\beta$.

In Appendix~\ref{app:markovianembedding}, we showed how the dynamics of the system $S$ in the non-Markovian collision model can be described by Markovian dynamics on the larger system $SL$ (with total dimension $d_S d_M^\ell$); in HBAC community, the larger system of such an embedding is known as the computation system, which comprises the original target and what are often referred to as compression / refrigerant systems. In this case, the system $SL$ interact with $k-\ell$ fresh machines (with total dimension $ d_M^{(k-\ell)})$ with maximum energy gap $(k-\ell) \mathcal{E}_{\textup{max}}$; this is known as the reset system, since these are the machines that are discarded after each interaction step, modeling a rethermalization with the external environment. One can decompose the total Hilbert space into the computation part and the reset part $R$, i.e., $\mathcal{H}_{SLR}=\mathcal{H}_{SL} \otimes \mathcal{H}_R$, where here $R$ refers to all of the reset machines comprising the environment. At any timestep, the dynamics of the system $SL$, which arises from unitary evolution on the system $SLR$, is given by (with identity maps implied on the parts of $R$ that do not yet take part in the interaction)
\begin{align}\label{eq:dynamicalmap}
    \varrho^{(n)}_{SL}(\beta,k,\ell) = \Lambda^{(n)}_{k,\ell}(\beta)[\varrho_{SL}^{(n-1)}(\beta,k,\ell)] := \ptr{R}{U^{(n)}_{k,\ell} (\varrho_{SL}^{(n-1)}(\beta,k,\ell) \otimes \varrho_R^{(0)}(\beta,k,\ell)) {U^{(n) \dagger}_{k,\ell}} }.
\end{align}
Note that $\varrho_R^{(0)}(\beta,k,\ell) = \bigotimes_{j=\ell+1}^{k} \tau_{M_j}$ is fixed and the same at each step of the protocol as it refers to the $k-\ell$ fresh machines taken from a thermal bath. In Fig.~\ref{fig:hbac}, we depict the equivalence between the standard HBAC protocol and the generalized collision model formalism.

\begin{figure}[t!]
\centering
\includegraphics[width=\linewidth]{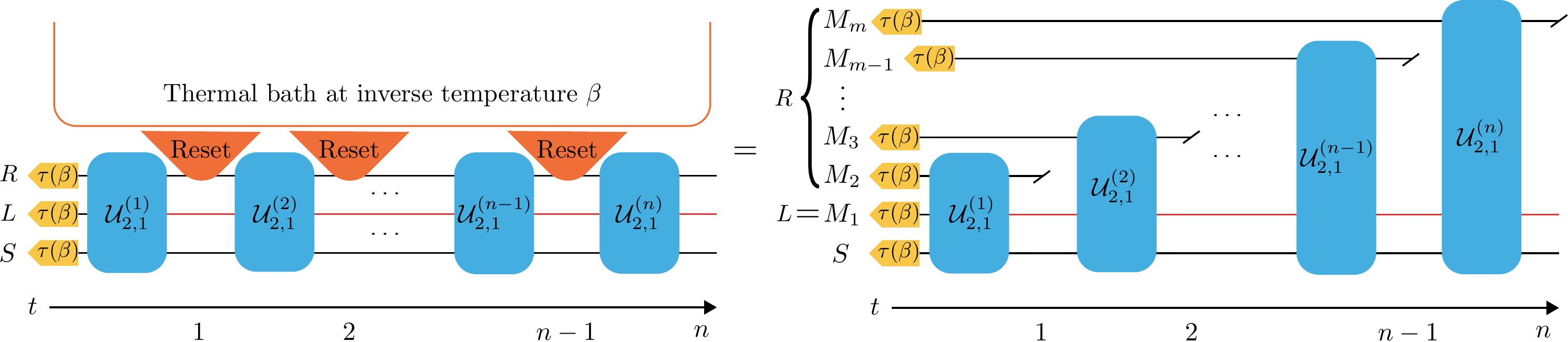}
\caption{ \label{fig:hbac}\textit{Equivalence between HBAC and generalized collision model}. The circuit for a HBAC protocol applied to a quantum system $S$ with one compression system (labeled $L$ to be consistent with our notation) and one reset system $R$. The compression systems store memory of previous interactions (red), whereas the reset ones are assumed to rethermalize with a bath at inverse temperature $\beta$ between each step of the protocol (orange). Noting that the ``reset'' step has the effect of tracing out the system and preparing a fresh one in the thermal state with the same temperature as the bath, it is clear that HBAC is equivalent to generalized collision model we consider (here, $k=2, \ell=1$). Further comparison with Fig.~\ref{fig:markovianembedding} highlights that HBAC need not require the agent to control the compression systems with high fidelity for the entire duration of the protocol: by making appropriate swaps, one only needs to track the $\ell$ compression systems / memory carriers for $\ell$ timesteps.}
\end{figure} 

We now wish to consider a non-adaptive protocol, in which the agent is only allowed to repeatedly apply a fixed unitary operation, i.e., $U_{k,\ell}^{(n)} = U_{k,\ell}~\forall \;n$. The dynamics can then be simplified to
\begin{align}\label{eq:appdynamicalmap}
    \varrho^{(n)}_{SL}(\beta,k,\ell) = \circ^n \Lambda_{k,\ell}(\beta) [\varrho_{SL}^{(0)}(\beta,\ell)],
\end{align}
where $\varrho_{SL}^{(0)}(\beta,\ell)=\tau_S(\beta)\bigotimes_{j=1}^{\ell}\tau_{M_j}(\beta)$ and $\circ^n \Lambda_{k,\ell}(\beta)$ is an $n$-fold concatenation of the dynamical map induced between any pair of timesteps, with $\Lambda_{ k,\ell}(\beta)$ defined such that $\Lambda_{k,\ell}(\beta)[X_{SL}] := \ptr{R}{U_{k,\ell} (X_{SL}\otimes \varrho_R^{(0)}(\beta,k,\ell)) U_{k,\ell}^\dagger}$. This dynamical map is thus independent of the timestep and fully determined by the unitary $U_{k,\ell}$ and the initial state of the fresh machines. In the following, we will show that it is possible to asymptotically reach the ultimate cooling limit via such a non-adaptive protocol. 

\begin{thm}\label{thm:asymptotic}
In the non-adaptive scenario, for a given $d_S$-dimensional system interacting at each step with $k$  $d_M-$dimensional identical machines, with $\ell$ of the machines used at each step carrying the memory forward, in the limit of infinite cycles, it is possible to reach the state $ \varrho^*_S(\beta,k,\ell)$ if the initial state $\varrho_{SL}^{(0)}(\beta,\ell)=\tau_S(\beta)\bigotimes_{j=1}^\ell \tau_{M_j}(\beta)$ is majorized by $ \varrho^*_{SL}(\beta,k,\ell)$. Moreover, it is possible to reach the asymptotic state via a state-independent protocol in which the operation acts on only neighbouring energy levels.
\end{thm}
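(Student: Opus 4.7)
The plan is to construct an explicit state-independent unitary $U_{k,\ell}$ that is supported on pairs of neighbouring energy eigenstates of $SLR$, and then to show that the induced non-adaptive dynamical map $\Lambda_{k,\ell}(\beta)$ drives the $SL$ register to a fixed point whose $S$-marginal is $\varrho^*_S(\beta,k,\ell)$.

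First, I would label the $d_S d_M^k$ joint energy eigenstates of $SLR$ in non-decreasing total energy and define $U_{k,\ell}$ as a product of two-level SWAPs over those adjacent pairs $(|i\rangle,|i+1\rangle)$ for which swapping transfers population from a higher-$SL$-energy configuration to a lower-$SL$-energy one (at the cost of a compensating excitation in the fresh-machine register $R$). By construction $U_{k,\ell}$ is fixed, acts only between energy-level neighbours, and generalises the adjacent-swap constructions used in qubit HBAC protocols~\cite{RaeisiPRL2019} to arbitrary $d_S, d_M, k, \ell$.

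Second, since $U_{k,\ell}$ is a permutation in this basis and $\tau_R$ is diagonal in the same basis, the map $\Lambda_{k,\ell}(\beta)$ reduces to a classical stochastic matrix on the $SL$ populations whose transition weights are Boltzmann factors of the fresh machines. I would compute its unique fixed point explicitly and verify, using the same geometric-series identities that appear in the proof of Theorem~\ref{thm:asymptoticmain}, that its eigenvalue spectrum coincides with that of $\varrho^*_{SL}(\beta,k,\ell)$, and moreover that the adjacency structure of the swaps arranges those eigenvalues in exactly the sub-levels needed so that the $S$-marginal is $\varrho^*_S(\beta,k,\ell)$ rather than merely being unitarily equivalent to it (which would have required an additional $U^{\textup{opt}}_{SL}$ at the end).

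Third, to establish convergence I would invoke a Lyapunov argument using the relative entropy $D(\varrho^{(n)}_{SL} \,\|\, \varrho^*_{SL})$, which is non-increasing under any CPTP map and in fact strictly decreasing under $\Lambda_{k,\ell}(\beta)$ away from the fixed point; combined with compactness of the set of states diagonal in the chosen basis and majorised by $\varrho^*_{SL}$, this yields norm convergence from any starting state satisfying the majorisation hypothesis. The $S$-marginal statement then follows from Lemma~\ref{lem:maj subsystem}. The main obstacle I anticipate is step two: the precise selection of adjacent swaps must be done carefully so that the classical fixed point has not only the correct eigenvalue spectrum but also the correct basis ordering, and while the one-fresh-machine-at-a-time inductive approach of~\cite{RaeisiPRL2019} indicates the right strategy, the combinatorial bookkeeping becomes delicate as $d_M^\ell$ grows.
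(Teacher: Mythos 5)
Your overall architecture matches the paper's: a fixed permutation unitary, reduction to a classical stochastic matrix on the $SL$ populations, identification of its unique fixed point, and convergence. The convergence step is where you genuinely diverge (relative-entropy Lyapunov function versus the paper's explicit spectral computation, which imports from Ref.~\cite{RaeisiPRL2019} that $\mathbbm{V}$ has a unique eigenvalue $1$ with all others of modulus strictly less than $1$, and even yields a mixing-time bound); your route is workable but less quantitative. The real problem is your construction of $U_{k,\ell}$ in step one. You order all $d_S d_M^{k}$ eigenstates of $SLR$ by total energy and swap adjacent pairs that lower the $SL$ energy. But two states adjacent in the total $SLR$ spectrum generically differ by a \emph{small} gap of the fresh-machine register $R$, and the detailed-balance ratio imprinted on neighbouring $SL$ populations at the fixed point is the Boltzmann factor of whichever $R$ gap mediates the transfer. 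To reach the optimal bound, every transfer between adjacent levels of the $SL$ ladder must be compensated by the \emph{maximal} gap $(k-\ell)\mathcal{E}_{\textup{max}}$ of $R$. The paper therefore does something more restrictive: it acts as the identity on all of $\mathcal{H}_R$ except the two-dimensional subspace $\mathcal{H}_G$ spanned by $\ket{0}_R$ and $\ket{d_M^{k-\ell}-1}_R$, interleaves the $2 d_S d_M^{\ell}$ levels of $\mathcal{H}_{SL}\otimes\mathcal{H}_G$ as $\ket{q}_{SL}\ket{0}_R,\ket{q}_{SL}\ket{d_M^{k-\ell}-1}_R,\ket{q+1}_{SL}\ket{0}_R,\dots$, and places $\sigma_x$ blocks on the interior adjacent pairs (Eq.~\eqref{eq:nonadaptiveunitary}); ``neighbouring energy levels'' in the theorem refers to neighbours in this virtual ladder, which are generally \emph{not} neighbours in the full $SLR$ spectrum. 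For qubit reset systems the distinction is invisible (the only gap is the maximal one), which is why the construction of Ref.~\cite{RaeisiPRL2019} does not force you to confront it; for $d_M^{k-\ell}>2$ your prescription would converge to a warmer fixed point set by sub-maximal gaps. You flag a delicacy in the swap selection, but you attribute it to basis ordering rather than to this gap-selection issue, which is the one that actually breaks optimality.

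One smaller point in your favour: your worry that the fixed point is only unitarily equivalent to the state with $\varrho^*_S$ as its marginal is legitimate. The paper's steady state is ordered along the global $SL$ energy ladder, which need not coincide with the locally ordered basis $\ket{nj}_{SL}$ used in Eq.~\eqref{eq:appmaj}; the paper passes over this with ``one can easily obtain,'' and strictly a final fixed (hence still state-independent) permutation on $SL$ is needed, exactly as in the proof of Theorem~\ref{thm:asymptoticmain}.
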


\begin{proof}
Due to our definition of cooling being based upon majorization, only the eigenvalues of the asymptotic state play a role in determining the fundamental cooling limit. We can therefore restrict our analysis to a specific orthonormal basis, e.g., energy eigenbasis (it is straightforward to generalize the obtained result to an arbitrary orthonormal basis). Here we focus on group of unitary operations that map diagonal density operators of the system $SL$ to diagonal ones. This restriction hence provides us an opportunity to describe the dynamics via stochastic maps that act upon the vector constructed with the eigenvalues of the system and memory carriers.   

The proof takes inspiration from a similar state-independent asymptotically optimal protocol introduced in Ref.~\cite{RaeisiPRL2019}. Here we employ a specific unitary $U$ on the entire $SLR$ system, which can be decomposed as follows
\begin{gather}\label{eq:appfixedunitary}
    U= V \oplus \bar{\openone}
\end{gather}
where $V$ acts unitarily on the Hilbert space $\mathcal{H}_{SL}\otimes \mathcal{H}_G$, in which $\mathcal{H}_G \subset \mathcal{H}_R$ is a subspace spanned by the two eigenstates of the reset systems (fresh machines) that have the maximum energy gap of $(k-\ell)\mathcal{E}_{\textup{max}}$, i.e., $\ket{0}_R$ and  $\ket{d_M^{(k-\ell)}-1}_R$, and $\bar{\openone}$ represents the identity on the subspace $\mathcal{H}_{\widetilde{G}} = \mathcal{H}_R \setminus \mathcal{H}_G$. In the energy eigenbasis, $V$ can also be written in the form of
\begin{equation}\label{eq:nonadaptiveunitary}
    V=\begin{bmatrix} 
   1 &  &  &  & \\
     & \sigma_x &  &  & \\
      &  & \ddots &  & \\
      &  &  & \sigma_x & \\
       &  &  &  &1 \\
    \end{bmatrix}_{2 d_Sd_M^{\ell}\times2 d_Sd_M^{\ell} },
\end{equation}
where $\sigma_x$ is the Pauli X operator. The energy eigenvectors of the Hilbert space $\mathcal{H}_{SL}\otimes \mathcal{H}_G$ are sorted as
\begin{align}
    \ket{2q}_{SLG}&= \ket{2q}_{SLR}=\ket{q}_{SL}\otimes \ket{0}_R\nonumber\\
    \ket{2q+1}_{SLG} &=   \ket{2q+1}_{SLR} =\ket{q}_{SL}\otimes \ket{d_M^{(k-\ell)}-1}_R
\end{align}
with corresponding eigenvalues of $\xi_{2q}={p_q^{(0)}}/{\mathcal{Z}_R}$ and $\xi_{2q+1}={p_q^{(0)}e^{-\beta (k-\ell)\mathcal{E}_{\textup{max}}}}/{\mathcal{Z}_R}$ for $q\in \{0,1,\hdots, d_Sd_M^{\ell}\}$, respectively, where $\mathcal{Z}_R={\mathcal{Z}_{M}(\beta)}^{(k-\ell)}$ is the partition function of the reset system and $\{p_q^{(0)}\}$ are the eigenvalues of the initial state of $SL$.

The unitary $U$ acts to swap every neighboring element on the diagonal part of the global density matrix in the subspace $\mathcal{H}_{SL}\otimes \mathcal{H}_G$ and leave the other elements untouched. We now focus on the transformation of the diagonal elements on the global space $SLR$ under such dynamics. We write the initial state as
\begin{align}
    \varrho_{SLR}^{(0)}&=
    \sum_{r=0}^{2d_S d_M^{\ell}-1}\xi_r^{(0)} \ket{r}\bra{r}_{SLR}
    +\sum_{r=2d_S d_M^{\ell}}^{d_S d_M^{k}-1} \xi_r^{(0)} \ket{r}\bra{r}_{SLR}\nonumber\\
   &=\alpha_{k\ell}\varrho_{SLG}^{(0)}\oplus (1-\alpha_{k\ell}) \varrho_{SL\widetilde{G}}^{(0)}
\end{align}
where $\varrho_{SLG}^{(0)}$ and $\varrho_{SL\widetilde{G}}^{(0)}$ are normalized density matrices and $\alpha_{k\ell}=(1+e^{-\beta (k-\ell )\mathcal{E}_{\textup{max}}})/{\mathcal{Z}_R}$. After applying the unitary $U$, we have
\begin{align}
   \varrho_{SLR}^{(1)}&= U \varrho_{SLR}^{(0)}U^\dagger=\alpha_{k\ell} V\,\varrho_{SLG}^{(0)}\,V^\dagger\oplus (1-\alpha_{k\ell} )\varrho_{SL\widetilde{G}}^{(0)}\nonumber\\
   &=\alpha_{k\ell} \bigg[\xi_0^{(0)} \ket{0}\bra{0}_{SLR}
    + \xi_{d_S d_M^{k}-1}^{(0)} \ket{d_S d_M^{k}-1}\bra{d_S d_M^{k}-1}_{SLR}\nonumber\\
    &+\sum_{r=0}^{d_S d_M^{\ell}-2} \big(\xi_{2r+2}^{(0)} \ket{2r+1}\bra{2r+1}_{SLR}+\xi_{2r+1}^{(0)} \ket{2r+2}\bra{2r+2}_{SLR}\big)\big)\bigg]\oplus (1-\alpha_{k\ell} ) \varrho_{SL\widetilde{G}}^{(0)}
    \label{eq:dynamics rsl1}
\end{align}

It is clear that the output state is also diagonal in the energy eigenbasis. One can easily obtain the reduced state of the system $SL$ after one timestep from Eq.~(\ref{eq:dynamics rsl1}) by taking a partial trace over $R$:
\begin{align}
  \varrho_{SL}^{(1)}&= \ptr{R}{ \varrho_{SLR}^{(1)}} \nonumber\\
&=\alpha_{k\ell} \bigg[\frac{(p_0^{(0)}+ p_1^{(0)})}{1+e^{-\beta (k-\ell)\mathcal{E}_{\textup{max}}}}\ket{0}\bra{0}_{SL}
    + \frac{(p_{d_S d_M^{\ell}-2}^{(0)}+ p_{d_S d_M^{\ell}-1}^{(0)})e^{-\beta (k-\ell )\mathcal{E}_{\textup{max}}}}{1+e^{-\beta (k-\ell )\mathcal{E}_{\textup{max}}}}\ket{d_S d_M^{\ell}-1}\bra{d_S d_M^{\ell}-1}_{SL}\nonumber\\
    &+\sum_{r=1}^{d_S d_M^{\ell}-2} \frac{(p_{r-1}^{(0)}e^{-\beta (k-\ell)\mathcal{E}_{\textup{max}}}+p_{r+1}^{(0)}) }{1+e^{-\beta (k-\ell )\mathcal{E}_{\textup{max}}}} \ket{r}\bra{r}_{SL}\bigg]
    +(1-\alpha_{k\ell} )\sum_{r=0}^{d_S d_M^{\ell}-1} p_{r}^{(0)} \ket{r}\bra{r}_{SL}
    \label{eq:dynamics rs1}
\end{align}
 
Since the output state on $\mathcal{H}_{SLR}$ has a block-diagonal structure with respect to this subspace decomposition, it is locally classical, i.e., has diagonal marginals with respect to the local energy eigenbasis. Therefore, the dynamics of the relevant part of the reduced state can be described in terms of a classical stochastic matrix acting on $SL$ (instead of a CPTP map as would be required if coherences were relevant). In addition, this stochastic matrix is independent of the timestep (since the protocol is non-adaptive) and the $SL$ state at each time. This allows us to describe the evolution of the target system under this protocol via a time-homogeneous Markov process. 

Since the unitary applied does not create coherence in the marginals, it is convenient to introduce a notation for the vectorized form of the diagonal entries of the $SL$ state, i.e., $\textbf{p}_{SL}:=\diag\{p_r\}_{r=0}^{d_S d_M^l - 1}$, where $p_r$ are the eigenvalues of the state $\varrho_{SL}$; since the density matrix is a unit trace positive operator, it follows that the vector $\textbf{p}_{SL}$ has non-negative entries that sum to 1, i.e., it is a probability vector. Then, the state transformation of the system $SL$ between each step of the protocol can be written as
\begin{equation}
    \textbf{p}^{(1)}_{SL}=\big(\alpha_{k\ell}\, \mathbbm{V}\big(\small{(k-\ell )\mathcal{E}_{\textup{max}}}\big) +(1-\alpha_{k\ell} )\mathbbm{1}\big)\textbf{p}^{(0)}_{SL}=:\, \mathbbm{T}\,\textbf{p}^{(0)}_{SL}
\end{equation}
where $\mathbbm{T} $ describes the transition matrix for the Markovian process and the matrix $\mathbbm{V}(\epsilon)$ is given by
\begin{equation}
    \mathbbm{V}(\epsilon)=\small{\frac{1}{1+ e^{-\beta \epsilon}}\begin{bmatrix} 
   1 & 1 &  & \hdots &0 \\
    e^{-\beta \epsilon} & 0& 1 &\hdots  & 0\\
      0& e^{-\beta \epsilon}  &  0&\hdots  & 0\\
     0 & 0 &\hdots  &\ddots & \vdots\\
      0 &  0& \hdots & e^{-\beta \epsilon} & e^{-\beta \epsilon} \\
    \end{bmatrix}_{ d_Sd_M^{\ell}\times d_Sd_M^{\ell} }}
\end{equation}
Since we apply the fixed unitary at each step and the transition matrix is independent of the state of $SL$, the state transformation of $SL$ after n steps can be written as
\begin{equation}
   \textbf{p}^{(n)}_{SL}= \mathbbm{T}^n\,\textbf{p}^{(0)}_{SL}.
\end{equation}
In order to obtain the asymptotic state of the system, we investigate the eigenvalues of the transition matrix given in terms of the two matrices $\mathbbm{V}$ and $\mathbbm{1}$, which allows us to compute the eigenvalues of $\mathbbm{T}$. The eigenvalues of the matrix $\mathbbm{V}$ are presented in Ref.~\cite{RaeisiPRL2019}: $\mathbbm{V} $ has a unique eigenvalue $\nu_0=1$, with the remaining eigenvalues given by 
\begin{equation}
    \nu_q=\frac{2\small{e^{-\frac{\beta}{2} (k-\ell )\mathcal{E}_{\textup{max}}}}\cos{\left(\frac{q\pi}{d_Sd_M^\ell}\right)}}{1+e^{-\beta (k-\ell )\mathcal{E}_{\textup{max}}}} \quad \quad \forall \; q\in\{1,\hdots,d_Sd_M^l-1\}. 
\end{equation}
Since $\mathbbm{1}$ is diagonal with respect to any orthonormal basis and has uniform eigenvalues, it is straightforward to show that the eigenvalues of $\mathbbm{T}$ are obtained by:
\begin{align}
\lambda_q=\alpha_{k\ell} \nu_q + (1-\alpha_{k\ell}).
\end{align}
Thus, $\mathbbm{T}$ also has a unique eigenvalue 1; the eigenvector associated to this value is the steady state solution of dynamics. Moreover, $\mathbbm{T}$ also has the same eigenvectors as $\mathbbm{V}$, since those associated to $\mathbbm{1}$ are trivial. We can then obtain the asymptotic state of the  system $SL$ under a constraint on its initial state, which turns out to only depends on the macroscopic properties of the system and the environment~\cite{RaeisiPRL2019}: %
\begin{equation}
    \textbf{p}^*_{SL}= \underset{n\to \infty}{\lim}\mathbbm{T}^n\,\textbf{p}^{(0)}_{SL}=\left\{\frac{e^{-\beta (k-\ell )q \mathcal{E}_{\textup{max}}}}{\mathbbm{Z}_{SL}(\beta,(k-\ell)\mathcal{E}_{\textup{max}})}\right\}_{q=0}^{d_Sd_M^l-1}.
\end{equation}
This steady state is gives the eigenvalues of the optimally cool achievable state if $\textbf{p}^{(0)}_{SL}\prec \textbf{p}^*_{SL}$. So far, we have shown how one can reach the optimally asymptotic state of $SL$ by employing the fixed unitary in Eq.~\eqref{eq:appfixedunitary} at each iteration. From this asymptotic state, one can easily obtain the coolest achievable reduced state for the system $S$, i.e., $\varrho_S^*$.
\end{proof}

In the non-adaptive scenario, one can further investigate how many repetitions of the cycle are required to achieve the asymptotic state (within a given tolerance). One useful measure for the number of iterations is the mixing time of a Markov process to reach a distance less than $\eta$ to the desired state, i.e., $t_{\textup{mix}} := \min{(n)} : \lvert \textbf{p}^{(n)}_{SL}-\textbf{p}^*_{SL} \rvert \leq \eta$. This mixing time can be upper bounded by a function of difference between the largest and second largest eigenvalues, $\Delta := \lambda_0 - \lambda_1$, as follows
 \begin{align}
  t_{\textup{mix}}(\eta)\leq \frac{1}{\Delta }\log \left(\frac{1}{\eta p^*_{d_Sd_M^l-1}}\right). 
 \end{align}
For the protocol considered above, the spectral gap can be explicitly calculated 
\begin{align}\label{eq:spectralgap}
    \Delta&=\lambda_0-\lambda_1=1-\alpha_{k\ell}\nu_1-(1-\alpha_{k\ell})\nonumber\\
    &=\alpha_{k\ell} \left(1-\frac{2\small{e^{-\frac{\beta}{2} (k-\ell )\mathcal{E}_{\textup{max}}}}\cos{\left(\frac{\pi}{d_Sd_M^\ell}\right)}}{1+e^{-\beta (k-\ell )\mathcal{E}_{\textup{max}}}}\right)\nonumber\\
    &\geq 
    \frac{1+e^{-\beta (k-\ell )\mathcal{E}_{\textup{max}}}}{(\mathcal{Z}_M(\beta))^{k-\ell}}
    \left(\frac{(1-e^{-\frac{\beta}{2} (k-\ell )\mathcal{E}_{\textup{max}}})^2}{1+e^{-\beta (k-\ell )\mathcal{E}_{\textup{max}}}}\right)\nonumber\\
    &=\frac{(1-e^{-\frac{\beta}{2} (k-\ell )\mathcal{E}_{\textup{max}}})^2}{(\mathcal{Z}_M(\beta))^{k-\ell}}.
\end{align}
Then we have
\begin{align}
     t_{\textup{mix}}(\eta)\leq \frac{(\mathcal{Z}_M(\beta))^{k-\ell}}{(1-e^{-\frac{\beta}{2} (k-\ell )\mathcal{E}_{\textup{max}}})^2}\log \left(\frac{1}{\eta p^*_{0} e^{-\beta (k-\ell )(d_Sd_M^l-1)\mathcal{E}_{\textup{max}}}}\right).
\end{align}
This result provides an estimate for the number of iterations of the protocol to reach the optimally cool system.

We now compare the cooling performance between adaptive and non-adaptive strategies for a given choice of memory structure. In the non-adaptive strategy, the rate of cooling is determined completely by the spectral gap $\Delta$ in Eq.~\eqref{eq:spectralgap}, as the same dynamics is repeated at each step. In the adaptive scenario, this is no longer the case and a single parameter does not dictate the rate of convergence to the asymptotic state. Instead, in general, the cooling rate depends upon the entire energy structure of the system and all machines, making a closed form expression difficult to derive. Nonetheless, we can describe the solution to the problem of reaching a step-wise provably optimal system state at finite times as a protocol, as done in the main text. This protocol converges to the same asymptotic value as the non-adaptive case, but offers a finite time advantage, as shown in Fig.~\ref{fig:adaptive}.

\begin{figure}[h!]
\centering
\includegraphics[width=0.6\linewidth]{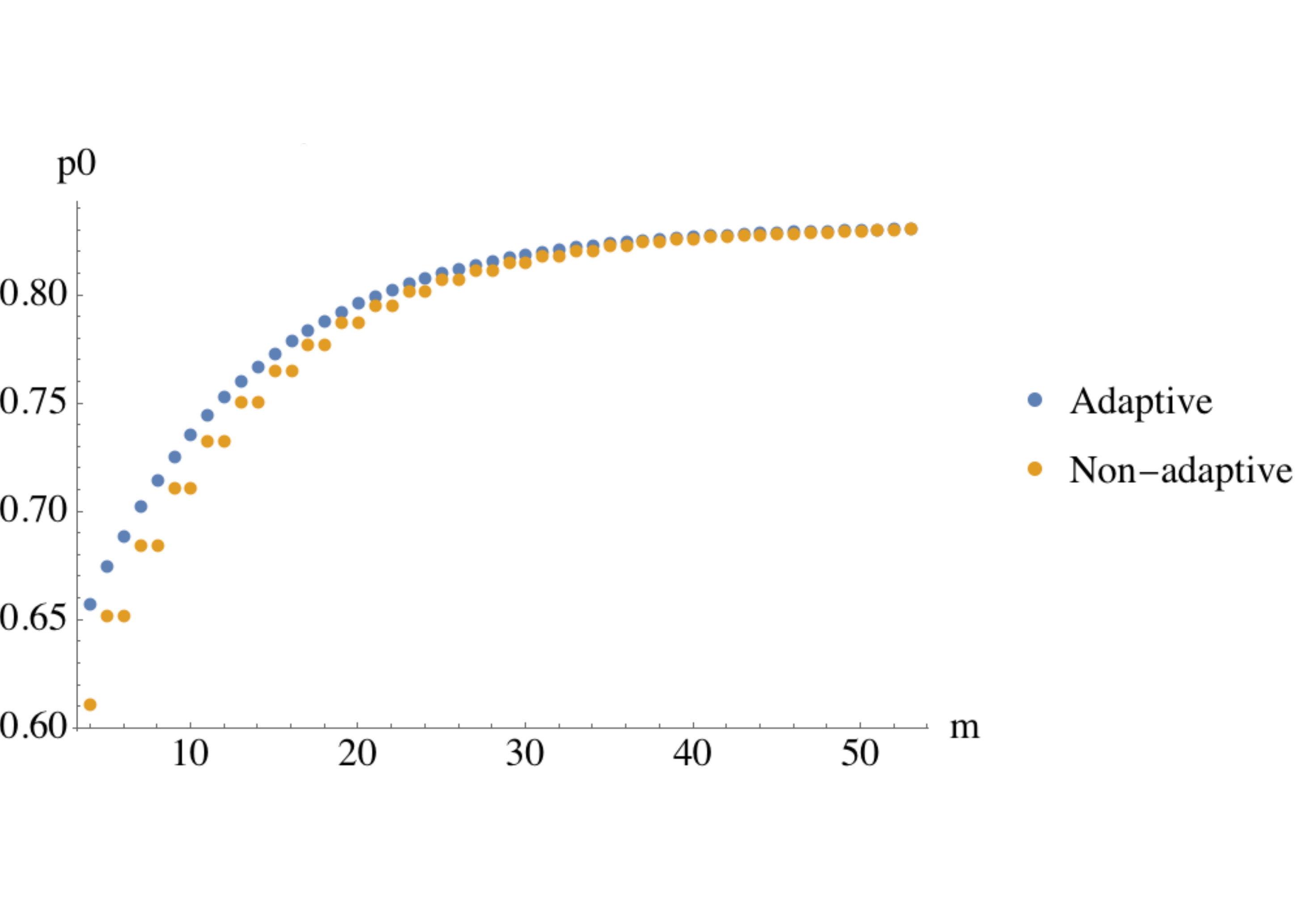}
\caption{ \label{fig:adaptive}\textit{Finite-time advantage with adaptive protocols.} Here we compare the adaptive and non-adaptive protocols for a qubit system (with $E_\textup{max} = 1$) interacting with qubit machines (with $\mathcal{E}_{\textup{max}}=2$) with initial temperature $\beta=0.2$ and memory structure given by $k=3, \ell=2$. In the adaptive scenario, we make use of the step-wise optimal protocol described in the main text; in the non-adaptive, the unitary in Eq.~\eqref{eq:nonadaptiveunitary} is repeatedly implemented. We see that, although both scenarios asymptotically converge to the same ground state population, the adaptive protocol outperforms the non-adaptive one at finite times. This behavior is more pronounced for larger dimensions. On the other hand, the non-adaptive, state-independent protocol is more robust and as it rapidly converges to the asymptotically-optimal state, therefore better suited to practical implementations. Note that, in this case, every second step of the non-adaptive protocol does not act to cool the system.}
\end{figure} 

We lastly compare various memory structures (i.e., values of $k$, $\ell$) with respect to the optimal adaptive protocol. In order to do so in a meaningful way, we compute the ground state population of the system for after a fixed number $m = k + (n-1)(k-\ell)$ machines have been exhausted. If one were to compare the ground state populations after $n$ unitaries had been implemented, for various $k$, $\ell$, one would be making an unfair comparison with respect to the total resources at hand; e.g., after three unitaries with $k=4, \ell=3$ the experimenter has used six machines, whereas for $k=3,\ell=0$, they have used nine. Comparing various scenarios at fixed values of $m$ provides insight into how cool the system can be prepared after all constituents of a finite sized environment are used up for the given memory structure. This change of perspective comes at the cost of the fact that the number of physical unitaries $n$ needed to be implemented in order to exhaust the resources (quantified by $m$) now varies; e.g., to use six machines with $k=4, \ell=3$ takes three unitaries, whereas with $k=1, \ell=0$ it takes six. Lastly note that not all values of $k, \ell$ are valid for a given $m$, due to the restriction that $n$ must be an integer. The short-term behavior is displayed in Fig.~\ref{fig:finitecomp}.

\begin{figure}[h!]
\centering
\includegraphics[width=0.6\linewidth]{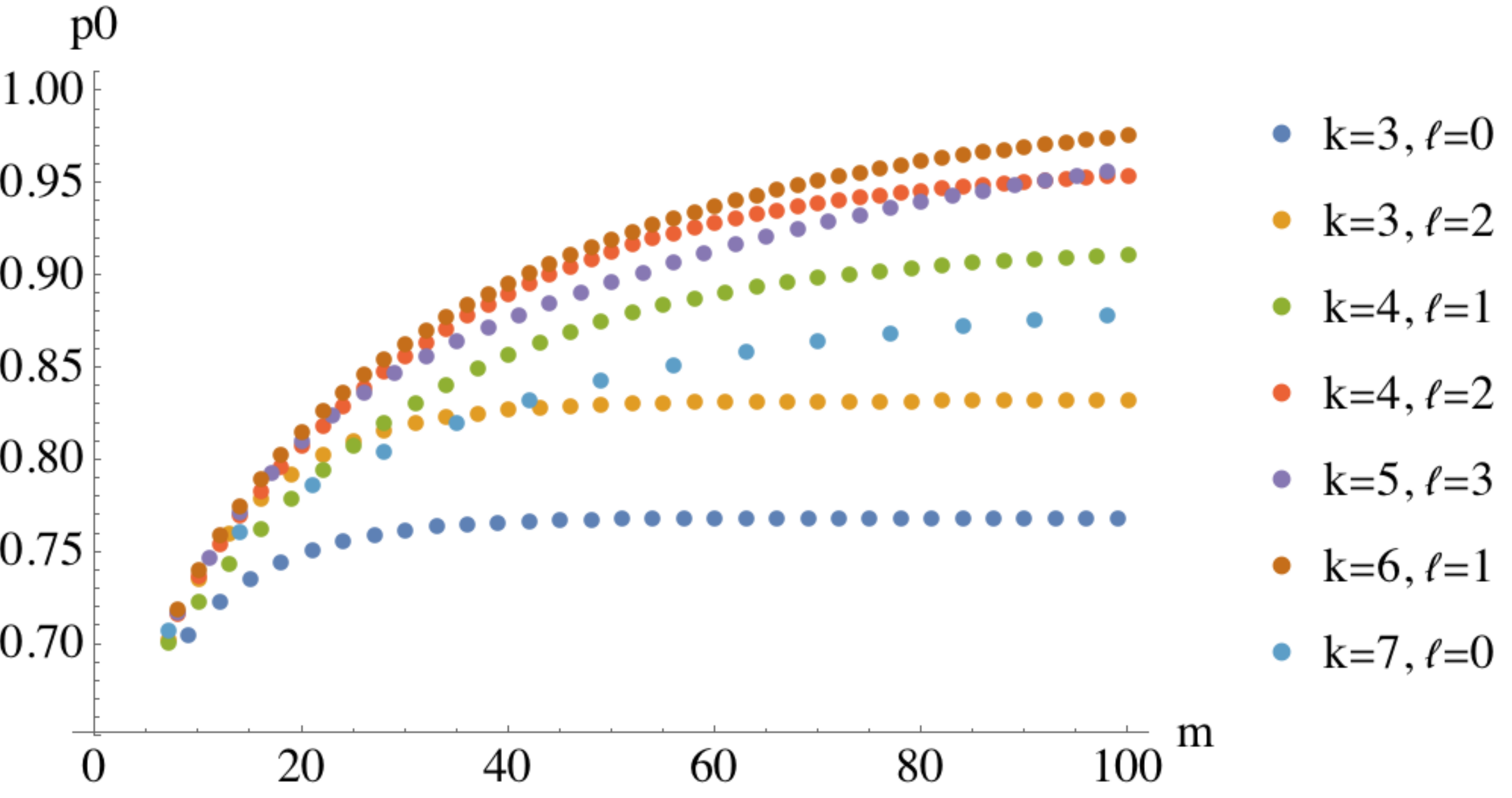}
\caption{ \label{fig:finitecomp}\textit{Finite-time behavior of step-wise optimal protocol.} Here we plot the ground state population after $m$ machines have been completely exhausted, for various values of $k,\ell$ as described in the text (we use a qubit system and qubit machines, with $\beta=0.2, E_{\textup{max}} = 1$ and $\mathcal{E}_{\textup{max}}=2$). The finite time behavior generally depends upon the full structure of the energy spectra, but already here we see some interesting effect. For instance, note that at $m=7$ (first point shown), the $k=7,\ell=0$ case provides the best possible cooling, as it allows for a full 8-partite unitary between the system and seven machines. However, for more applications of such a unitary with in the memoryless scenario, the performance can be worse than other cases with more local interactions (smaller $k$) and longer memory (larger $\ell$).}
\end{figure} 

\newpage


%

\end{document}